\documentclass[10pt,a4paper]{article}

\usepackage[utf8]{inputenc}

\usepackage{fullpage}
\usepackage{amsmath}
\usepackage{amsfonts}
\usepackage{amssymb}
\usepackage{amsthm}
\usepackage{times}
\usepackage{dsfont}
\usepackage{bbm}
\usepackage{color}
\usepackage{authblk}
\usepackage{paralist}
\usepackage{algorithm}
\usepackage{algpseudocode}
\usepackage{quantum}
\usepackage[final]{graphicx}
\usepackage[colorlinks = True, linkcolor=blue, citecolor=blue, pdfpagemode=UseNone, pdfstartview=FitH, hypertexnames=false]{hyperref}
\usepackage[capitalise]{cleveref}
\usepackage{todonotes}


\newcommand{\affilcr}{\protect\\}

\algblockx[forall]{Forall}{EndForall}%
  [1]{\textbf{for all} #1 \textbf{do}}{\textbf{end}}
\algblockx[loop]{Loop}{EndLoop}%
  [1]{\textbf{loop} #1}{\textbf{end loop}}

\crefrangelabelformat{equation}{\textup{(#3#1#4)}--\textup{(#5#2#6)}}
\creflabelformat{enumi}{(#2#1#3)}
\Crefname{algorithm}{Algorithm}{Algorithms}
\Crefname{table}{Table}{Tables}
\Crefname{remark}{Remark}{Remark}
\Crefname{theorem-break}{Theorem}{Theorems}
\Crefname{corollary-break}{Corollary}{Corollaries}
\Crefname{definition-break}{Definition}{Definitions}
\crefname{part}{part}{parts}  
\Crefname{part}{Part}{Parts}
\crefname{equations}{Eqs.}{Eqs.}
\Crefname{equations}{Equations}{Equations}
\creflabelformat{equations}{\textup{(#2#1#3)}}
\crefrangelabelformat{equations}{\textup{(#3#1#4)}--\textup{(#5#2#6)}}

\setlength{\oddsidemargin}{0.1in}
\setlength{\topmargin}{-1.0cm}  
\setlength{\textwidth}{6in}
\setlength{\textheight}{10in}
\setlength{\parskip}{1mm}

\newcommand{\hide}[1]{}

\def \be   {\begin{equation}}
\def \ee   {\end{equation}}
\def \ben  {\begin{enumerate}}
\def \een  {\end{enumerate}}
\def \bi   {\begin{itemize}}
\def \ei   {\end{itemize}}

\newtheorem{theorem}{Theorem}
\newtheorem{lemma}[theorem]{Lemma}

\begin{document}

\title{An Information-Theoretic Proof of the Constructive\\
  Commutative Quantum Lov\'{a}sz Local Lemma}

\author[1]{Martin Schwarz\thanks{m.schwarz@univie.ac.at}}
\affil{Vienna Center for Quantum Science and Technology,
  \affilcr Faculty of Physics, University of Vienna, A-1090 Wien, Austria}
\author[2]{Toby S. Cubitt\thanks{tsc25@cam.ac.uk}}
\affil[2]{DAMTP, University of Cambridge,
  Centre for Mathematical Sciences,\affilcr
  Wilberforce Road, Cambridge CB3 0WA, United Kingdom}
\author[1,3]{Frank Verstraete\thanks{frank.verstraete@univie.ac.at}}
\affil[3]{Faculty of Science, Ghent University, B-9000 Ghent, Belgium}
\date{}

\maketitle

\begin{abstract}
  The Quantum Lovász Local Lemma (QLLL) \cite{AKS12} establishes non-constructively that any quantum system constrained by a local Hamiltonian has a zero-energy ground state, if the local Hamiltonian terms overlap only in a certain restricted way. In this paper, we present an efficient quantum algorithm to prepare this ground state for the special case of commuting projector terms. The related classical problem has been open for more than 34 years. Our algorithm follows the breakthrough ideas of Moser's \cite{moser09} classical algorithm and lifts his information theoretic argument to the quantum setting.
  A similar result has been independently published by Arad and Sattath \cite{AS13} recently.
\end{abstract}

\section{Introduction}\label{intro}
In 1973 László Lovász proved a remarkable probabilistic lemma nowadays known as \emph{Lovász Local Lemma (LLL)} \cite{EL75, spencer1977} . Informally, it says that whenever events in a set of probability events are only locally dependent (i.e. each event depends on at most a constant number of other events), then with positive probability \emph{none} of them occurs. This probability might be extremely small, nevertheless the lemma shows that such an event \emph{exists}. Lovász and Erd\H{o}s applied this lemma with great success to prove the existence of various rare combinatorial objects, an approach which came to be known as the \emph{probabilistic method} \cite{AS00}. Their method has one drawback: even though the LLL shows the existence of certain objects, it doesn't provide any clue of how to \emph{construct} such objects efficiently -- the lemma is non-constructive. Things started to change when in 1991 Beck was the first to give an efficient algorithm to construct such objects, but only under assumptions stronger than the LLL \cite{beck91}. After a sequence of improvements on Beck's work, Moser's breakthrough in 2009 finally gave us a constructive and efficient proof of the LLL under the same assumptions as the original one \cite{moser09}. First, he proved a widely used variant called the \emph{symmetric LLL}, and then jointly with Tardos gave a fully general constructive and efficient proof of the LLL \cite{MT10}. The symmetric LLL considers the special case, where the probabilities of all of the dependent events are bounded by the same constant, and can be stated as follows:
\begin{lemma}[Symmetric Lovász Local Lemma]
Let $A_1,A_2,...,A_m$ be a set of events such that each event occurs with probability at most $p$.
If each event is independent of all others except for at most $d-1$ of them, and
\[
epd \leq 1,
\]
then
\[
\Pr\left[\;\overline{A_1} \cap \overline{A_2} \cap ... \cap \overline{A_m}\;\right] > 0.
\]
\end{lemma}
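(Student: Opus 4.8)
The plan is to obtain the symmetric statement from the general (asymmetric) local lemma, which is proved by induction on the size of a conditioning set. First I would introduce the dependency relation: write $j\sim i$ when $A_j$ is among the at most $d-1$ events that $A_i$ may depend on, so that $A_i$ is mutually independent of the family $\{A_k : k\neq i,\ k\not\sim i\}$. If $d=1$ then all events are mutually independent and $\Pr[\bigcap_i\overline{A_i}]=\prod_i(1-\Pr[A_i])\ge(1-p)^m>0$, where $p<1$ because $epd\le 1$ forces $p\le 1/e$; hence assume $d\ge 2$ from now on.

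The core is the following claim, proved by induction on $|S|$ for subsets $S\subseteq\{1,\dots,m\}$: (a) $\Pr[\bigcap_{j\in S}\overline{A_j}]>0$, and (b) for every $i\notin S$,
\[
\Pr\!\Big[A_i \,\Big|\, \bigcap_{j\in S}\overline{A_j}\Big] \le \frac1d .
\]
For $|S|=0$ this is immediate: (a) is trivial and (b) reads $\Pr[A_i]\le p\le\tfrac1{ed}\le\tfrac1d$, which is exactly the hypothesis $epd\le 1$. For the inductive step of (b) I would split $S=S_1\cup S_2$ with $S_1=\{j\in S: j\sim i\}$ and $S_2=S\setminus S_1$. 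When $S_1=\emptyset$, independence of $A_i$ from the $S_2$-family gives $\Pr[A_i\mid\bigcap_{j\in S_2}\overline{A_j}]=\Pr[A_i]\le\tfrac1d$ directly. Otherwise, write the conditional probability as the ratio
\[
\Pr\!\Big[A_i \,\Big|\, \bigcap_{j\in S}\overline{A_j}\Big]
=\frac{\Pr\!\big[A_i\cap\bigcap_{j\in S_1}\overline{A_j}\,\big|\,\bigcap_{k\in S_2}\overline{A_k}\big]}
      {\Pr\!\big[\bigcap_{j\in S_1}\overline{A_j}\,\big|\,\bigcap_{k\in S_2}\overline{A_k}\big]} .
\]
The numerator is at most $\Pr[A_i\mid\bigcap_{k\in S_2}\overline{A_k}]=\Pr[A_i]\le p$, again by independence of $A_i$ from the $S_2$-family. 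The denominator is bounded below by the chain rule: peeling the events of $S_1$ off one at a time, every conditioning set that appears is a strict subset of $S$ not containing the index being peeled, so part (b) of the induction hypothesis bounds each factor $\Pr[\overline{A_j}\mid\cdots]$ below by $1-\tfrac1d$, yielding at least $(1-\tfrac1d)^{|S_1|}\ge(1-\tfrac1d)^{d-1}$ since $|S_1|\le d-1$. (Part (a) on the smaller sets keeps all these conditionals well defined, and combining (a) and (b) on $S\setminus\{j\}$ for one $j\in S$ also yields (a) for $S$.) Therefore
\[
\Pr\!\Big[A_i \,\Big|\, \bigcap_{j\in S}\overline{A_j}\Big] \le \frac{p}{(1-\tfrac1d)^{d-1}} \le \frac{p}{e^{-1}} = ep \le \frac1d ,
\]
using the elementary inequality $(1-\tfrac1d)^{d-1}\ge e^{-1}$ and then $epd\le 1$; this closes the induction.

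With the claim established, a telescoping product completes the proof:
\[
\Pr\!\Big[\bigcap_{i=1}^m\overline{A_i}\Big]
=\prod_{i=1}^m\Pr\!\Big[\overline{A_i}\,\Big|\,\bigcap_{j<i}\overline{A_j}\Big]
\ge \prod_{i=1}^m\Big(1-\frac1d\Big) = \Big(1-\frac1d\Big)^m > 0 .
\]
I expect the denominator estimate in the inductive step to be the main obstacle: the induction must be arranged so that the chain-rule decomposition only ever conditions on strict subsets of $S$ (so the hypothesis genuinely applies), and the partition of $S$ into neighbours and non-neighbours of $i$ is exactly what lets the numerator collapse to the unconditional $\Pr[A_i]$. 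Everything else is bookkeeping together with the single analytic fact $(1-1/d)^{d-1}\ge e^{-1}$.
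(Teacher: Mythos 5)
Your proof is correct: it is the standard non-constructive argument for the symmetric LLL, obtained by specializing the general (asymmetric) lemma with all weights equal to $1/d$ --- the induction on the conditioning set $S$, the split of $S$ into neighbours $S_1$ and non-neighbours $S_2$ of $i$, the chain-rule lower bound $(1-\tfrac1d)^{|S_1|}\ge(1-\tfrac1d)^{d-1}\ge e^{-1}$ on the denominator, and the final telescoping product are all sound, and you rightly handle $d=1$ separately and keep the conditionals well defined via part (a). However, this is a genuinely different route from what the paper does: the paper never proves this lemma itself (it cites Erd\H{o}s--Lov\'asz and Spencer for it); its contribution is Moser's constructive, information-theoretic argument, lifted to the quantum commuting case, in which the events are $k$-local constraints on uniformly random (qu)bits, the resampling algorithm is viewed as a lossless compression scheme, and termination follows because a long run would compress the entropy drawn from the random source below what is information-theoretically possible (\cref{thm:success} and its detailed version via \cref{lem:superposition} and the typical-subspace bound). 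The trade-off is clear: your inductive argument applies to arbitrary events in any probability space and needs only the dependency-degree hypothesis, but it is purely existential; the paper's compression argument is restricted to the variable-based (CSP/QSAT) setting with $p=r/2^k$ and $d\le 2^k/(er)$, but in exchange it yields an efficient algorithm that actually constructs a satisfying assignment (respectively a zero-energy state), which is the entire point of the paper. So your proposal proves the stated lemma, but by the classical non-constructive method rather than by anything resembling the paper's mechanism.
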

The symmetric LLL is often used in the context of constraint satisfaction problems (CSPs) to prove
the existence of an object specified by a list of local constraints.
In this case one considers, say, $n$ bit strings $X$ chosen uniformly at random. The events are given by the local constraint functions $A_i=f_i(X)$, where each function $f_i$ is $k$-local in the sense that it depends only on $k$ of the $n$ bits; the event occurs if the constraint is satisfied. If these $A_i$ meet the  constraints of the symmetric LLL, the LLL implies that an $x$ satisfying all the constraints \emph{exists}, and Moser's algorithm can be used to construct such an $x$ efficiently. In this way the LLL also implies that the set of $k$-SAT instances, where each variable occurs in at most $d < 2^k/ek$ clauses, is \emph{always} satisfiable. (Without this restriction on variable occurrence, deciding satisfiability is of course the archetypical \textsf{NP}-complete problem.)

During the STOC 2009 presentation of his result, Moser presented a beautiful information-theoretic argument, valid under very slightly stronger conditions, which underlies the more complicated but tight result in \cite{moser12}. It is this argument that the present paper generalizes to the quantum setting.

The non-constructive proof of the LLL has recently been generalized to the quantum case by Ambainis, Kempe, and Sattath \cite{AKS12}. In this setting, events are replaced by orthogonal projectors of rank $1$ (or rank $r$ in general) onto $k$-local subsystems, and the authors achieve a non-constructive proof of a Quantum Lovász Local Lemma (QLLL) with exactly the same constants as in the classical version.

\begin{lemma}[Symmetric Quantum Lovász Local Lemma \cite{AKS12}] \label{lem:sQLLL}
Let $\{\Pi_1,...,\Pi_n\}$ be a set of $k$-local projectors of rank at most $r$. If every
qubit appears in at most $d < 2^k/(e\cdot r \cdot k)$ projectors, then the instance is satisfiable.
\end{lemma}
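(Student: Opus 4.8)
\emph{Proof proposal.} The plan is to lift the standard non-constructive proof of the classical symmetric LLL from events to subspaces, following the relative-dimension approach of Ambainis, Kempe and Sattath \cite{AKS12}. Work in $\mathcal{H} = (\mathbb{C}^2)^{\otimes n}$ with $D = 2^n$, write $X_i := \mathbbm{1} - \Pi_i$ for the projector onto the subspace of states satisfying constraint $i$ (and, throughout, identify a projector with its range), and note that the instance is satisfiable precisely when $\bigcap_{i=1}^n \operatorname{range}(X_i) \neq \{0\}$. For a subspace $V$ put $R(V) := \dim(V)/D$; for subspaces $V,W$ with $W \neq \{0\}$ put $R(V\mid W) := \dim(V\cap W)/\dim(W)$; and for a projector $\Pi$ with complement $X = \mathbbm{1}-\Pi$ put $\widetilde R(\Pi \mid W) := \dim(\Pi(W))/\dim(W)$, the relative dimension of the \emph{image} of $W$ under $\Pi$. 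The first step is to record three elementary facts: (i) a telescoping chain rule $R(V_1\cap\cdots\cap V_t\mid W) = \prod_{\ell} R(V_\ell \mid V_{\ell+1}\cap\cdots\cap V_t\cap W)$, valid whenever all the intersections are nonzero; (ii) a complement rule $R(X\mid W) = 1 - \widetilde R(\Pi\mid W)$, which is just rank--nullity applied to $\Pi|_W$ since $\ker(\Pi|_W) = \operatorname{range}(X)\cap W$; and (iii) monotonicity: $\Pi(V\cap W)\subseteq\Pi(W)$, so $\widetilde R(\Pi \mid W)$ can only decrease when $W$ is intersected with further subspaces.

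The second step is the independence lemma, which is where the quantum structure enters. Call $\Pi_i$ and $\Pi_j$ dependent if their supporting qubit sets intersect, and let $G$ be the resulting graph; since each of the $k$ qubits in the support $Q_i$ of $\Pi_i$ lies in at most $d$ projectors, $\Pi_i$ has at most $k(d-1)$ neighbours in $G$. I claim that if $W = \bigwedge_{j\in S}X_j$ for a set $S$ of indices none of which is a neighbour of $i$, then $\widetilde R(\Pi_i \mid W) = R(\Pi_i)$. Indeed each such $X_j$ acts as the identity on $Q_i$, so $W = \mathcal{H}_{Q_i}\otimes W'$ for some $W'\subseteq \mathcal{H}_{\overline{Q_i}}$, while $\Pi_i = \pi_i \otimes \mathbbm{1}_{\overline{Q_i}}$ with $\pi_i$ a projector of rank $\le r$ on $\mathcal{H}_{Q_i} = (\mathbb{C}^2)^{\otimes k}$; hence $\Pi_i(W) = \operatorname{range}(\pi_i)\otimes W'$, giving $\widetilde R(\Pi_i\mid W) = \operatorname{rank}(\pi_i)/2^k = R(\Pi_i) \le r/2^k$.

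The third step is the LLL induction, a direct transcription of the classical argument. Fix $x\in(0,1)$ (take $x = 1/(k(d-1)+1)$; the degenerate case $d=1$, where all supports are disjoint, is immediate) and prove by induction on $|S|$ that $\widetilde R(\Pi_i\mid\bigwedge_{j\in S}X_j)\le x$ for every $i$ and every $S\not\ni i$, simultaneously showing these conditioning subspaces are nonzero so the ratios are well defined. In the inductive step, split $S$ into the set $S_1$ of neighbours of $i$ and the set $S_2$ of non-neighbours, and let $W_2 := \bigwedge_{j\in S_2}X_j$. Monotonicity and the independence lemma bound the numerator by $\dim\!\big(\Pi_i(\bigwedge_{S}X_j)\big)\le \dim(\Pi_i(W_2)) = R(\Pi_i)\dim(W_2)$; the chain rule, the complement rule, and the inductive hypothesis bound the denominator by $\dim(\bigwedge_{S}X_j) = \dim(W_2)\prod_{\ell}R(X_{j_\ell}\mid\cdots) \ge \dim(W_2)(1-x)^{|S_1|} \ge \dim(W_2)(1-x)^{k(d-1)}$. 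Hence $\widetilde R(\Pi_i\mid\bigwedge_{S}X_j)\le R(\Pi_i)/(1-x)^{k(d-1)} \le (r/2^k)/(1-x)^{k(d-1)}$, which is $\le x$ as soon as $r/2^k \le x(1-x)^{k(d-1)}$. With $x = 1/(k(d-1)+1)$ and $(1-\tfrac{1}{m+1})^m > 1/e$, this is implied by $r/2^k \le 1/(ekd)$, i.e.\ by the hypothesis $d < 2^k/(erk)$. Finally the same chain-rule/complement-rule computation yields $R(\bigwedge_{i=1}^n X_i) \ge (1-x)^n > 0$, so $\bigcap_i \operatorname{range}(X_i)\neq\{0\}$ and the instance is satisfiable.

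The step I expect to be the main obstacle — indeed the only genuinely non-classical point — is choosing the definitions so that the chain rule, the complement rule, and independence all hold at once: subspaces are not a Boolean algebra, so the naive identity ``$R(X\mid W) = 1 - R(\Pi\mid W)$'' with $R(\Pi\mid W) = \dim(\operatorname{range}(\Pi)\cap W)/\dim W$ is simply false. The fix is to quantify the failure of constraint $i$ on $W$ by the relative dimension of the \emph{image} $\Pi_i(W)$ rather than of the intersection $\operatorname{range}(\Pi_i)\cap W$, which makes the complement rule an instance of rank--nullity; with that choice in hand, independence has to be re-derived from the tensor-product factorization of projectors on disjoint qubit sets, taking the place of statistical independence of events. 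Once these points are settled, the remainder is bookkeeping identical to the classical proof.
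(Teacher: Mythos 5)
Your proposal is essentially correct, but it is not the route the paper takes: the paper never proves \cref{lem:sQLLL} at all (it is quoted from \cite{AKS12}), and what you have written is, in substance, the Ambainis--Kempe--Sattath non-constructive relative-dimension argument. Your key choices --- quantifying the failure of constraint $i$ on a subspace $W$ by $\dim(\Pi_i(W))/\dim(W)$ so that rank--nullity yields the complement rule, the tensor-factorization independence lemma for non-neighbouring constraints, and the classical induction with $x=1/(k(d-1)+1)$ and degree bound $k(d-1)$ --- are exactly the dimension-counting substitutes for conditional probability used there, and the arithmetic closing the induction under $d<2^k/(erk)$ checks out. Two small repairs: your fact (iii) is false as a statement about the ratio (intersecting $W$ with another subspace can \emph{increase} $\dim(\Pi(W))/\dim(W)$); what you actually invoke later, namely $\dim(\Pi(V\cap W))\le\dim(\Pi(W))$ for the numerator alone, is the correct statement. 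And the non-triviality bookkeeping needs to be ordered so that $\bigwedge_{j\in S}X_j\neq\{0\}$ at size $|S|$ is obtained by peeling off one constraint and using the bound at size $|S|-1$ (this matters when $S$ contains no neighbour of $i$, so that $W_2$ has the same size as $S$). By contrast, the paper's own machinery is the constructive, Moser-style argument: a coherent-measurement algorithm that resamples violated projectors from a randomness register, with \cref{lem:recursionsimple} guaranteeing zero energy on termination and \cref{thm:success} bounding the termination probability via a compression/counting bound on the log register, giving \cref{thm:escQLLL} with run-time $O(m+\log(1/\varepsilon))$. The trade-off is clear: your (AKS-style) argument handles arbitrary, possibly non-commuting projectors but gives only existence of a zero-energy state, with no preparation procedure; the paper's argument produces the state efficiently but only for commuting projectors, commutativity entering precisely in \cref{lem:recursionsimple}.
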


In this paper, we generalize Moser's algorithm to the quantum setting in the special case of commuting projectors, yielding an efficient proof of \cref{lem:sQLLL} for this case. While all of our projectors are diagonal in a common basis, the basis vectors will in general be highly entangled quantum states. The (classical) constructive LLL does not immediately apply in the diagonal basis. Indeed, the preparation of such highly entangled ground states is far from trivial and subject to active research in the field of quantum Hamiltonian complexity theory \cite{osborne12,schuch11,AE11,STVPGC13}.

Furthermore, we improve upon Moser's argument and make it tight up to the assumptions of the non-constructive symmetric (Q)LLL. Of course, this also implies a tight algorithmic result for the classical special case. Our argument relies on a simple universal method to compress a binary classical bit sequence, which yields the tight result. In the process of generalizing the result to the quantum setting, we explicitly bound the run-time and error probabilities using (a tight special case of) the \emph{strong converse of the typical subspace theorem} \cite{winter99} as an indispensible ingredient, which is a fundamental result of quantum information theory.

More precisely, we prove the following efficient symmetric Quantum Lovász Local Lemma for commuting projectors with the same parameters as the original LLL and QLLL. Our proof is a quantum information-theoretic argument, but by restricting to classical constraints our argument immediately specializes to a tight classical information-theoretic proof.

\begin{theorem}[Efficient symmetric commutative QLLL] \label{thm:escQLLL}
  Let $\Pi_1,\Pi_2,...,\Pi_m$ be a set of commuting $k$-local projectors of rank at most $r$ acting on a system of $n$ qubits. If each projector intersects with at most $d-1$ of the others, where $d \leq \frac{2^k}{r e}$, then for any $\varepsilon > 0$ there exists a quantum algorithm with run-time $O\!\left(m+\log(\frac{1}{\varepsilon})\right)$ that returns a quantum state $\sigma$ with probability $1-\varepsilon$, such that $\sigma$ has energy zero, i.e. $\forall i, 1\leq i \leq m: tr(\Pi_i \sigma)=0$.
\end{theorem}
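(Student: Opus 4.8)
The plan is to port Moser's resampling algorithm, and the entropic/witness-tree analysis of its running time, from bits to physical qubits, using commutativity exactly where the classical structure would otherwise break.

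\textbf{Algorithm and correctness.} Because the $\Pi_i$ pairwise commute they are simultaneously diagonalizable, so one measurement reveals all of their outcomes at once. Let $\mathrm{scope}(i)\subseteq[n]$ be the $\le k$ qubits $\Pi_i$ acts on and $\Gamma^+(i)=\{j:\mathrm{scope}(j)\cap\mathrm{scope}(i)\neq\emptyset\}\ni i$, so $|\Gamma^+(i)|\le d$. The algorithm initializes the $n$ qubits in $I/2^n$, maintains (incrementally, by measurement) the set of indices whose projector currently fires, and while that set is nonempty pops its least element $i$ and calls $\textsc{Fix}(i)$, outputting the state once the set is empty; $\textsc{Fix}(i)$ replaces the qubits of $\mathrm{scope}(i)$ by fresh maximally mixed qubits, re-measures only $\{\Pi_j\}_{j\in\Gamma^+(i)}$, and while some $j\in\Gamma^+(i)$ now fires recursively calls $\textsc{Fix}(j)$. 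Only $\mathrm{scope}(i)$ is disturbed, so only the $O(d)$ projectors in $\Gamma^+(i)$ can change status (commutativity makes re-measuring just these consistent with the global state), whence each $\textsc{Fix}$ call costs $O(d)$ elementary operations and the total cost is $O(\#\textsc{Fix}\text{ calls})$. Correctness is immediate: when the loop exits every $\Pi_i$ last read $0$, and commutativity puts the output $\sigma$ in the common kernel $\bigcap_i\ker\Pi_i$, so $\mathrm{tr}(\Pi_i\sigma)=0$ for all $i$. Note one cannot instead run the classical constructive LLL in the common eigenbasis: its vectors are entangled, the bad subspaces need not be $k$-local in the eigenbasis coordinates, and resampling $k$ \emph{physical} qubits is not a resampling of $k$ coordinates — so the running-time analysis must be genuinely quantum.

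\textbf{Running time (the heart).} Following Moser(--Tardos)~\cite{moser09,MT10}, from the execution one reads a classical log $L$: the recursion forest on the $T:=\#\textsc{Fix}$ nodes plus, for each non-root node, the choice among the $\le d$ elements of its parent's $\Gamma^+$. The deterministic ``least firing index'' rule makes each root a function of the reconstructed state at that point, so root labels need not be stored, and a careful (``universal'') encoding of the bounded-degree forest together with these choices uses $(\log_2(ed)+o(1))\,T$ bits, the $e$ being the exponential growth rate of the number of such forests. Now model the randomness quantum-mechanically: the $n+kT$ injected maximally mixed qubits form a source of von Neumann entropy $n+kT$; running the recursion in reverse yields a decoder that reconstructs this source from $L$ together with the final state of the qubits, at the extra cost of $(\log_2 r)\,T$ qubits, because just after each resampling the $\mathrm{scope}$ qubits lie in the $\le r$-dimensional range of the firing projector (whose inclusion map is fixed by the label) and so are carried by an $r$-dimensional system. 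Thus a run of length $T$ compresses a source of entropy $n+kT$ into $n+(\log_2(erd)+o(1))\,T$ qubits, which is $(n+kT)-\Omega(T)$ for $T$ above a suitable multiple of $m$ whenever $d\le 2^k/(re)$ (using $n\le km$ and that the main loop picks each index at most once). The strong converse of the typical subspace theorem~\cite{winter99} then bounds the total probability of runs of length $>T$ by $2^{-\Omega(T)}$; hence more than $O\!\left(m+\log\frac1\varepsilon\right)$ $\textsc{Fix}$ calls occur with probability at most $\varepsilon$, and truncating the algorithm there (declaring failure otherwise) gives the stated run-time and success probability $1-\varepsilon$.

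\textbf{Main difficulty.} The routine part is the algorithm and its correctness; the work is in the quantum half of the running-time bound: (i) modelling/purifying the injected randomness so that the classical log genuinely determines it; (ii) converting ``the decoder reconstructs the source exactly'' versus ``no compression below the von Neumann entropy'' into the exponential tail, via the strong converse of the typical subspace theorem rather than a crude per-step estimate; and (iii) counting the recursion forests and the boundary markers tightly enough that the threshold is exactly $d\le 2^k/(re)$ — matching the non-constructive (Q)LLL — rather than a constant factor worse, which is the point at which Moser's STOC argument loses and where the simple universal classical compression must be invoked. Supplying the $e$ from the forest count and removing the root labels via the deterministic tie-break are the two ingredients that make the bound tight.
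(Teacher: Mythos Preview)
Your proposal is correct and follows essentially the same route as the paper: a quantum version of Moser's resampling recursion, correctness via commutativity (\cref{lem:recursionsimple}), and termination via an incompressibility argument that pits the $n+kN$ qubits of injected randomness against the compressed description (log plus the $r$-dimensional ``residue'' of each failed measurement), with the exponential tail supplied by the strong converse of the typical subspace theorem.

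Two differences are worth flagging. First, the paper does not use projective measurements followed by an appeal to a reverse decoder; it makes the algorithm \emph{manifestly unitary} from the outset by performing coherent measurements into the log register and a unitary swap-and-rotate into a fixed rank-$r$ subspace $P_r$ (\cref{eq:defRi}). This is what lets the compression bound reduce to the one-line estimate $\tr(P_N U\rho_0 U^\dag)\le 2^{-(n+kN)}\tr P_N$; your purification step is exactly what is needed to reach this point, and the paper's construction is the clean way to carry it out. Second, the paper does not count recursion forests to extract the factor $e$. Instead it observes that the raw log is a length-$(m+Nd)$ bit string with exactly $N$ ones and compresses it by its index among such strings, giving $\log\binom{m+Nd}{N}\le m+N\log(de)$ bits directly. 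This ``universal'' binomial encoding is the paper's device for tightening Moser's STOC argument from $4d$ to $ed$, and it sidesteps the generating-function count your sketch would need to justify the $(\log(ed)+o(1))T$ claim. Both encodings are informationally equivalent and yield the same threshold $d\le 2^k/(re)$, but the paper's is more elementary.
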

It might be interesting to note that for non-commuting projectors our proof still implies that the algorithm terminates within the same run-time bound, but the argument about the energy of the state returned (\cref{lem:recursion}) is no longer applicable. \Cref{lem:recursionsimple} (see also \cref{lem:recursion}) is the crucial and only place in the proof where commutativity of the projectors is used.

In \cref{sec:algorithm}, we fix the notation and review Moser's classical algorithm. In \cref{sec:quantum_algorithm} we present the key ideas of our quantum generalization, and give a simple quantum information-theoretic analysis in \cref{sec:analysis} which leads to the main result. (A manifestly unitary variant of the recursive algorithm, complete with technical details, is given in \cref{sec:details}.) We conclude in \cref{sec:conclusion}.

\section{The Algorithm} \label{sec:algorithm}
In this section we describe our quantum version of Moser's algorithm. Before we do so, we quickly review Moser's classical original algorithm. We will start by setting up some notation, where we try to keep the notational differences between the quantum and classical case at a minimum.

The input to the classical (quantum) algorithm consists of a $k$-(Q)SAT instance. Each $k$-(Q)SAT instance is defined on $n$ (qu)bits and consists of $m$ clauses (projectors of rank at most $r$) $\{\Pi_i\}_{1\leq i \leq m}$. Each clause (projector) is $k$-local, i.e. it acts non-trivially only on a subset of $k$ (qu)bits and as the identity on the $n-k$ remaining qubits. Given an instance $\{\Pi_i\}$, the \emph{exclusive neighborhood function} $\Gamma(\Pi_i)$ returns an ordered tuple of projectors sharing at least one qubit with $\Pi_i$. Furthermore we define the inclusive neighborhood function $\Gamma^+(\Pi_i) = \Gamma(\Pi_i) \cup \Pi_i$. The $j^{\text{th}}$ neighbor of $\Pi_i$ is then defined as $\Gamma^+(\Pi_i)_j$. To simplify the notation, we sometimes write $\Gamma^+(i,j)$ instead of $\Gamma^+(\Pi_i)_j$. In the special case of a $k$-QSAT instance where all $\{\Pi_i\}$ are diagonal in the standard basis, it reduces to a classical $k$-SAT instance and projectors reduce to clauses. All logarithms in this paper use base $2$.

\begin{algorithm}[!hbtp]
  \caption{Classical and quantum information-theoretic LLL solver}\label{alg:moser}
  \begin{algorithmic}[1]
    \Procedure{\textnormal{solve\_lll}}{$\Pi_1,\Pi_2,\dots,\Pi_m$}
      \State $W \gets n$ uniformly random bits \Comment{initial state}
      \State $R \gets kN$ uniformly random bits \Comment{source of randomness}
      \State $t \gets 0$, $L \gets 0...0$ \Comment{book keeping registers}
      \For{$i \gets 1$ to $m$}
        \State fix$(\Pi_i)$ \label{alg:fix_all}
      \EndFor
      \State return (SUCCESS, W)
    \EndProcedure
    \Procedure{\textnormal{fix}}{$\Pi_i$}
      \State measure $\Pi_i$ on $W$  \label{ln:moser:measure}
      \State append the binary result to the execution log, $L$
      \If{$\Pi_i$ was violated}
        \State swap subsystem of $\Pi_i$ with block $t$ in $R$
		\State apply $U_i$ to rotate the state of the swapped subsystem in $R$
        \State $t \gets t+1$
        \Forall{$\Pi_j \in \Gamma^+(\Pi_i)$}
          \State fix$(\Pi_j)$
        \EndForall
      \EndIf
    \EndProcedure
  \end{algorithmic}
\end{algorithm}

\subsection{Moser's classical algorithm} \label{sec:classical_algorithm}
We will now quickly review Moser's classical algorithm to set the scene for our quantum generalization. In \cref{alg:moser} we assume a classical $k$-SAT instance as input. The algorithm operates on a register of $n$ bits sampled from a uniformly random source. The main procedure \emph{solve\_lll()} iterates over the clauses $\Pi_1,\Pi_2,\dots,\Pi_m$ and calls subroutine \emph{fix($\Pi_i)$} on each. Procedure \emph{fix($\Pi_i$)} checks if $\Pi_i$ is satisfied, records the outcome to a logging register $L$ (``the log'') and returns if it is. Otherwise \emph{fix()} resamples the bits of the unsatisfied clause from the uniformly random source and recurses on all neighbors in $\Gamma^+(\Pi_i)$ in turn.
Throughout the paper \emph{fixing a clause} or \emph{fixing a projector} will mean entering such a recursion. Whenever we observe a clause not to be satisfied, we say the measurement of the clause has \emph{failed} (or \emph{succeeded} otherwise.) In the quantum case, whenever a projective measurement $\{\Pi_i, (\id-\Pi_i)\}$ has outcome $\Pi_i$ we say it has failed (or \emph{succeeded} if the outcome is $(\id - \Pi_i)$.)

Moser's key insight was to understand \cref{alg:moser} as a compression algorithm, that draws entropy from a uniformly random source and compresses it into the log register $L$. He shows that the random initial state of $n$ bits and all entropy drawn from the source during execution of the algorithm can be losslessly compressed into the log and the output state. By showing that each failed measurement yields a tighter bound on the entropy of the system, he argues that the algorithm must terminate with high probability after $O(m)$ measurements, as otherwise the entropy of the system was compressed below the entropy drawn from the source.
Furthermore, each time $\emph{fix()}$ returns, one more projector is satisfied. Thus, once the algorithm terminates, all projectors are satisfied and the output state must therefore have energy zero.

In Moser's algorithm the log is introduced merely as a bookkeeping device to facilitate the correctness proof of the algorithm. It is not necessary to produce the log in ``real world'' implementations; the log is merely a proof device to allow one to argue about the entropy of the system by constructing a reversible compression scheme. Since a quantum algorithm in the standard quantum circuit model is unitary, thus in particular reversible, and the concept of reversible lossless compression is central to Moser's proof, this proof approach is a natural fit, and an ideal starting point to develop an efficient quantum algorithm for the QLLL based on a quantum information-theoretic argument. Once unitarity is \emph{required}, the log is no longer an optional, fictitious device. Instead, it becomes a natural and \emph{necessary} by-product of any unitary (or even reversible) implementation.

\subsection{The quantum algorithm} \label{sec:quantum_algorithm}
Although we have to modify the analysis somewhat, our quantum algorithm is just a coherent version of the original classical algorithm of \cref{alg:moser}. In this section, we show how a beautifully simple quantum information-theoretic analysis of this coherent algorithm gives the desired result. A fully detailed version of the proof based on a manifestly unitary version of \cref{alg:moser} (i.e.\ \cref{alg:quantummoser}) is given in \cref{sec:details}.

Unsurprisingly, the quantum algorithm operates on four registers: an $n$-qubit work register $W$, an $T$-qubit log register $L$ consisting of qubits labeled $j_1,...,j_T$, a $kN$-qubit randomness register $R$, and a $\log N$-qubit register $t$ counting the number of failed measurements.\footnote{The algorithm will also have to store some additional data for classical book-keeping, which however we neglect here as it isn't important in the analysis. Full details are given in \cref{sec:details}.} Henceforth, $j_l$ will denote the $l^{\text{th}}$ qubit of $L$, and $R_t$ will denote the $t^{\text{th}}$ \emph{block} of $k$ qubits in $R$. We will use $W_i$ to denote the $k$ qubits in $W$ on which the $i^\text{th}$ projector acts non-trivially. We use $\Pi_i$ to denote both the projector on $W_i$, and the projector $\Pi_i\ox\id$ extended to the whole of $W$; when not indicated explicitly, it will be clear from context which we mean. We initialise the quantum registers to the state
\begin{equation}\label{eq:initialstate}
\ket{\psi_0^{x,y}} = \ket[W]{x} \ket[R]{y} \ket[L]{0_1,...,0_T} \ket[t]{0}
\end{equation}
where $x,y$ are uniformly random bit strings of sizes $n$ and $kN$, respectively.
\Cref{alg:moser} proceeds by \emph{coherently measuring} projectors on the work register and appending the measurement outcomes to the log register. More precisely, a ``coherent measurement'' of $\Pi_i$ is the following unitary operation between the work register and the next unused qubit in the log register.\footnote{The algorithm necessarily keeps track of the index of the next unused log register qubit, as part of the classical bookkeeping implicit in \cref{alg:moser}.}:
\begin{equation}
  C_i = \Pi_{W_i} \ox X_{j_l} + (\id - \Pi_i)_{W_i} \ox \id_{j_l}.
\end{equation}
If $l-1$ measurements have been performed so far, the next coherent measurement writes its outcome to the $l^{\text{th}}$ qubit in the log register $L$.

As is well known~\cite{NC00,wilde13}, when applied to an arbitrary state of the work register $W$ and a $\ket[j_l]{0}$ in the log register $j_l$, the unitary $C_i$ prepares a coherent superposition of the two measurement outcomes in the log register $j_l$, entangled with the corresponding post-measurement state in the work register. The square-amplitudes of the two components are the probabilities of the corresponding measurement outcomes.

If a projector $\Pi_i$ is violated (outcome ``1''), we know that the state of the subsystem $W_i$ is contained in the subspace $\Pi_i$. In this case, we proceed by taking the next $k$ qubits from the randomness register, and swapping them with the $k$ work-qubits we've just measured. The state of the measured qubits must be in the $r$-dimensional subspace projected onto by $\Pi_i$. We can therefore apply a unitary $U_i$ to rotate the measured qubits (which are now in the randomness register) into a fixed $r$-dimensional subspace which is independent of the particular $\Pi_i$ measured. We identify this subspace with the rank-$r$ projector $P_r=diag(1,...,1,0,...,0)$. The unitary $U_i$ can be computed classically for each $i$ by diagonalizing $\Pi_i$, i.e.\ $U_i \Pi_i U_i^\dag = P_i \leq P_r$ with equality if $rk(P_i)=rk(P_r)=r$. Let us denote this sequence of unitary swap-and-rotate operations as $R_i$. Note that the measured, swapped, and rotated $k$ qubits $\ket{\varphi_i}$ have support on subspace $P_r$ only, since
\begin{equation} \label{eq:livesinPr}
  U_i \Pi_i \ket[R]{\varphi_i}
  = U_i \Pi_i U_i^\dag U_i \ket[R]{\varphi_i}
  = P_r U_i \ket[R]{\varphi_i}
\end{equation}
The following partial isometry implements this swap-and-rotate procedure (it can be extended to a unitary in the usual way):
\begin{equation} \label{eq:defRi}
  R_i = \proj{1}^{j_l} \ox (\id^{W_i}\ox U_i^{R_t}\cdot U_{\text{SWAP}}^{W_iR_t})
                 + \proj{0}^{j_l} \ox \id^{W_iR_t}.
\end{equation}
We will always apply $R_i$ immediately after each coherent measurement, so for brevity we refer to the whole isometry $R_iC_i$ as a ``measurement operation''. Whenever we get a violation, we increment the boolean count register $t$.

The recursive algorithm now proceeds analogously to the classical algorithm \cref{alg:moser}. The only differences are that we interpret $\Pi_i$ as commuting projectors (not necessarily diagonal in the computational basis), and that `measure' in \cref{ln:moser:measure} is interpreted as a coherent measurement causing the state (and thus the control flow) to split into a superposition depending on the measurement outcomes.\footnote{For an explicit, manifestly unitary description that includes \emph{all} the classical bookkeeping in the quantum description, see \cref{alg:quantummoser} in \cref{sec:details}.}

Note that any computational basis state describing a sequence of measurement outcomes \emph{uniquely determines} the next measurement to perform; i.e.\ there is a deterministic function $f:\{0,1\}^*\mapsto \{[m],\bot\}$ from finite sequences $j_1,\dots,j_{l-1}$ of previous measurement outcomes to the index $i_l$ of the next measurement (i.e. $i_l=f(j_1,\dots,j_{l-1})$). If there is no further measurement to perform ($\bot$), the measurement sequence terminates (i.e.\ $f(\dots,\bot)=\bot$). It is not difficult to see that this function can be computed efficiently classically. By linearity, we can extend this to a unitary operation on arbitrary superpositions of a specific number of measurement outcomes.

Apart from the measurement operations, the rest of the algorithm involves purely classical processing to determine the next measurement, and thus \emph{is diagonal in the computational basis}. Furthermore, each measurement operation acts on a fresh log qubit. Thus orthogonal states of the log remain orthogonal for the rest of the computation. This allows us to view the execution of the algorithm as a coherent superposition of \emph{histories}, which may be analyzed independently. \Cref{lem:superposition} in \cref{sec:details} makes this precise, and shows that after $T$ coherent measurements, the state (essentially) has the form
\begin{align} \label{eq:superposition3}
\ket{\psi_T^{x,y}} &=\sum_{j_1,\dots,j_T \in \{0,1\}}  P_r^{\ox t_{j_1,\dots,j_T}} \ket[W,R]{\varphi_{j_1,\dots,j_T}}\ket[L]{j_1,\dots,j_T} \ket[t]{t_{j_1,\dots,j_T}}.\quad
\end{align}
Henceforth, we refer to any term in \cref{eq:superposition3} indexed by $j_1,\dots,j_T$ as a \emph{history}. Note the tensor product
structure among the registers in each history.

We let the algorithm run for a total of $T=m+Nd$ measurement steps, for some $N$ chosen in advance. If the recursion in \cref{alg:moser} has reached a maximum of $N$ failed measurements or terminates early, the algorithm (coherently) does nothing for the remaining steps. Finally, after running for this many steps, we measure the log register $L$ in order to collapse the superposition of measurement outcomes to a particular measurement sequence.

\section{Analysis} \label{sec:analysis}
To show that our algorithm efficiently finds a state in the kernel of all $\Pi_i$ with high probability, we need to prove two properties captured in \cref{lem:recursionsimple} and \cref{thm:success}, that together imply the desired result:
\begin{enumerate}[(1).]
  \item If the sequence of measurement outcomes terminates, the corresponding state of the work register is in the kernel of all $\Pi_i$ (\cref{lem:recursionsimple}). \label{it:recursion_success}
  \item The probability that the measurement sequence terminates goes exponentially to~1 for $N > m/(k-\log(der))$ (\cref{thm:success}). \label{it:exp_success_prob}
\end{enumerate}
\begin{lemma}\label{lem:recursionsimple}
  Let $\ket[W]{\varphi_{l}} = \ket[W]{\varphi_{j_1,\dots,j_l}}$ be the state of register $W$ in a history where the algorithm has obtained a failure in the $l^{\text{th}}$ measurement outcome, thereby starting a recursion. Assuming that the recursion eventually terminates, let $\ket{\varphi_m}=\ket{\varphi_{j_1,\dots,j_m}}$ be the state of register $W$ when the algorithm has just returned from that recursion after measurement $m \geq l+k$. Then
  \begin{enumerate}[(i).]
    \item all satisfied projectors $\Pi_i$ stay satisfied, i.e. if $\Pi_i \ket{\varphi_{l}} = 0$, then also $\Pi_i \ket{\varphi_{m}} = 0$. \label{it:other_projectors_satisfied3},
    \item the originally unsatisfied projector $\Pi_l$ is now satisfied, i.e. if $\Pi_{l}\ket{\varphi_{l}}=\ket{\varphi_{l}}$, then $\Pi_{l}\ket{\varphi_{m}}=0$. \label{it:current_projector_satisfied3}
    \end{enumerate}
\end{lemma}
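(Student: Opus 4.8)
The plan is to prove, by strong induction on the number of coherent measurement operations performed, the following self-contained statement about a single \emph{recursion}: suppose that inside some fixed history the $a$-th projector has just been measured and found violated, with work-register state $\ket{\varphi}$ at that moment, and that the ensuing recursion --- apply the swap-and-rotate $R_a$, then call $\textnormal{fix}(\Pi_j)$ for each $\Pi_j\in\Gamma^+(\Pi_a)$ in turn --- eventually returns, with work-register state $\ket{\varphi'}$; then (a) $\Pi_a\ket{\varphi'}=0$, and (b) for every $i$, if $\Pi_i\ket{\varphi}=0$ then $\Pi_i\ket{\varphi'}=0$. Specializing to $a=l$, $\ket{\varphi}=\ket{\varphi_l}$, $\ket{\varphi'}=\ket{\varphi_m}$ then yields \cref{it:current_projector_satisfied3} from (a) and \cref{it:other_projectors_satisfied3} from (b). Working inside one history is legitimate because, by \cref{lem:superposition}, the four registers keep a product structure along a history, each coherent measurement of $\Pi_j$ has within that history a definite outcome and applies $\Pi_j$ to the work register if the outcome is ``failure'' and $\id-\Pi_j$ if it is ``success'', and the termination hypothesis makes the number of measurement operations in the recursion finite --- which is the quantity the induction runs on.

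The heart of the argument is a one-line step-claim about each atomic transition in the body of a recursion, namely ``measure $\Pi_j$, and if it failed run the recursion on $\Pi_j$''. On the success branch this is just $\ket{\chi}\mapsto(\id-\Pi_j)\ket{\chi}$; on the failure branch it is $\ket{\chi}\mapsto\Pi_j\ket{\chi}$ followed by a recursion on $\Pi_j$ that uses strictly fewer measurement operations than the ambient one, so the induction hypothesis applies to it. In both cases I claim the output state $\ket{\chi'}$ satisfies (i) $\Pi_j\ket{\chi'}=0$ and (ii) $\Pi_i\ket{\chi}=0\Rightarrow\Pi_i\ket{\chi'}=0$ for all $i$: on the success branch (i) is trivial and (ii) is $\Pi_i(\id-\Pi_j)\ket{\chi}=(\id-\Pi_j)\Pi_i\ket{\chi}=0$; on the failure branch, $\Pi_i\ket{\chi}=0$ gives $\Pi_i\Pi_j\ket{\chi}=\Pi_j\Pi_i\ket{\chi}=0$, so the inner recursion is fed a state annihilated by $\Pi_i$ and, by the induction hypothesis, returns one annihilated both by $\Pi_i$ (its (b)) and by $\Pi_j$ (its (a)). The commutation $[\Pi_i,\Pi_j]=0$ is used in both halves of this step-claim, and nowhere else in the proof --- precisely the role of commutativity flagged after \cref{thm:escQLLL}.

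With the step-claim in hand the induction closes immediately. From $\ket{\varphi}$ I first apply $R_a$, which is supported only on the $k$ qubits $W_a$ (plus a block of the randomness register and a log qubit) and therefore commutes with every $\Pi_i$ whose qubits are disjoint from $W_a$, i.e.\ every $\Pi_i\notin\Gamma^+(\Pi_a)$. Then I run through $\Gamma^+(\Pi_a)$ in order, obtaining work states $\ket{\chi_0}=R_a\ket{\varphi},\ket{\chi_1},\dots,\ket{\chi_p}=\ket{\varphi'}$, each step an instance of the step-claim. For (b): if $\Pi_i\notin\Gamma^+(\Pi_a)$ and $\Pi_i\ket{\varphi}=0$ then $\Pi_i\ket{\chi_0}=0$ by the disjointness, and step-claim (ii) propagates it to $\ket{\chi_p}$; if $\Pi_i\in\Gamma^+(\Pi_a)$, then $\Pi_i$ is the projector processed at some position $s^*$, where step-claim (i) forces $\Pi_i\ket{\chi_{s^*}}=0$ regardless of whether $\Pi_i$ was satisfied beforehand, and step-claim (ii) then carries it forward to $\ket{\chi_p}$. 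Part (a) is the instance of this second case with $\Pi_i=\Pi_a$, which indeed lies in $\Gamma^+(\Pi_a)$.

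The single non-routine point --- where I expect to have to be careful --- is clause (ii) of the step-claim: a coherent measurement disturbs the work register by a projector, and one must be sure this cannot undo a constraint that was already satisfied; this rests entirely on $[\Pi_i,\Pi_j]=0$ and genuinely fails for non-commuting projectors, which is why this lemma (and hence the zero-energy guarantee of \cref{thm:escQLLL}, though not its run-time bound) is confined to the commuting case. Everything else is bookkeeping: the swap-and-rotate can only threaten the neighbours of the projector being fixed, and those are exactly the projectors revisited by the recursion over $\Gamma^+(\Pi_a)$, which the induction hypothesis repairs without touching any other satisfied projector. A manifestly unitary, fully detailed version of this argument is carried out in \cref{lem:recursion} in the appendix; the index bound $m\ge l+k$ there is a structural fact about the earliest step at which a recursion can return and plays no role above.
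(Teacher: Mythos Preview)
Your proof is correct and takes essentially the same approach as the paper's: both are structural inductions on the recursion tree that hinge on (a) the commutativity $[\Pi_i,\Pi_j]=0$ to show that a successful or failed measurement cannot un-satisfy an already-satisfied constraint, (b) the disjoint-support observation that the swap-and-rotate $R_a$ commutes with every $\Pi_i\notin\Gamma^+(\Pi_a)$, and (c) the fact that $\Pi_a\in\Gamma^+(\Pi_a)$ so that it is re-measured inside its own recursion. The only difference is the induction parameter---the paper inducts on the recursion \emph{depth} $s$ (base case = deepest level), whereas you do strong induction on the total \emph{number of measurement operations} inside a recursion---and your explicit ``step-claim'' packages the one place commutativity is used more cleanly than the paper's base-case/inductive-step split; substantively the arguments are the same.
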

\begin{proof}
  We prove \Cref{lem:recursionsimple} by induction on the recursion level $s$. Let $\Pi_{s}$ be the projector that shall be fixed in the level $s$.

  \textbf{Base case:} Consider the deepest level of recursion, which necessarily exists since, by assumption, the recursion eventually terminates. After the failed $\Pi_{l}$ measurement, the algorithm performs the swap-and-rotate operation followed by measurements of all projectors in $\Gamma^+(\Pi_{l})$ on the state $\Pi_{l} \ket{\varphi_{l}}$. These must succeed, since the algorithm is already at the deepest level of recursion. Thus the algorithm returns yielding the state $\ket{\varphi_m}$. Since $\Pi_{l} \in \Gamma^+(\Pi_{l})$ and all $\Pi_{l}$ commute, \labelcref{it:current_projector_satisfied3} follows. To show \labelcref{it:other_projectors_satisfied3}, note that all previously satisfied $\Pi_i \in \Gamma^+(\Pi_{l})$ clearly stay satisfied, i.e.\ $\forall \Pi_i \in \Gamma^+(\Pi_{l}): \Pi_i \ket{\varphi_m}=0$. For all other $\Pi_i \notin \Gamma^+(\Pi_{l})$, notice that $\Pi_i$ commutes with the swap-and-rotate operation as they act on disjoint subsystems, yielding $\forall \Pi_i \notin \Gamma^+(\Pi_{l}): \Pi_i \ket{\varphi_m}=0$, which proves the base case.

  \textbf{Inductive step:} As induction hypotheses, assume \labelcref{it:other_projectors_satisfied3,it:current_projector_satisfied3} are true for any originally unsatisfied projector $\Pi_{s+1}$ after the algorithm returns from recursion level $s+1$. At level $s$ of the recursion, after a failed measurement $\Pi_{l}$ the algorithm performs the swap-and-rotate operation followed by measurements of all projectors in $\Gamma^+(\Pi_{l})$ on the state $\Pi_{l} \ket{\varphi_{l}}$. For any failed measurement, the algorithm will recurse to level $s+1$ and return with \labelcref{it:other_projectors_satisfied3,it:current_projector_satisfied3} satisfied by the induction hypothesis. Thus, after returning from the recursion, one additional $\Pi_i \in \Gamma^+(\Pi_l)$ is satisfied. For any successful measurement, again one additional $\Pi_i$ is satisfied due to commutativity of the $\Pi_i$. Thus, once the iteration over the neighborhood is complete, the algorithm returns the state $\ket{\varphi_m}$ with all $\Pi_i \in \Gamma^+(\Pi_l)$ satisfied. Since $\Pi_{l} \in \Gamma^+(\Pi_{l})$, \labelcref{it:current_projector_satisfied3} follows. To see that \labelcref{it:other_projectors_satisfied3} also holds, note that all previously satisfied $\Pi_i \in \Gamma^+(\Pi_{l})$ stay satisfied, i.e.\ $\forall \Pi_i \in \Gamma^+(\Pi_{l}): \Pi_i \ket{\varphi_m}=0$. For all other $\Pi_i \notin \Gamma^+(\Pi_{l})$, notice that $\Pi_i$ commutes with the swap-and-rotate operation as they act on disjoint subsystems, yielding $\forall \Pi_i \notin \Gamma^+(\Pi_{l}): \Pi_i \ket{\varphi_m}=0$. This establishes the inductive step, and the lemma follows.
\end{proof}

\noindent Property~\labelcref{it:recursion_success} follows from \cref{lem:recursionsimple} and the fact that the algorithm measures each projector $\Pi_i$ once at the top level of the recursion. Property~\labelcref{it:exp_success_prob} is the content of the following lemma.

\begin{lemma}\label{thm:success}
  If we let the algorithm run for $T=m+Nd$ steps, the probability that the measurement sequence terminated within this number of steps is $\geq 1 - 2^{-N(k-\log der) + m + \log N}$.
\end{lemma}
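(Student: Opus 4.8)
The plan is to run the information-theoretic compression argument of Moser in the quantum setting, using the history decomposition \labelcref{eq:superposition3}. First I would fix attention on a single history, i.e.\ a fixed computational-basis string $j_1,\dots,j_T$ of measurement outcomes, and observe that the sequence of measurements performed in that history is completely determined by the deterministic function $f$ described in \cref{sec:quantum_algorithm}. The key combinatorial fact (inherited verbatim from Moser's analysis) is that if the recursion has \emph{not} terminated within $T=m+Nd$ measurement steps, then at least $N$ of those measurements were failures: each of the $m$ top-level calls contributes at most one "free" success, and every failure triggers a recursion over a neighbourhood of size at most $d$, so $m+Nd$ steps with fewer than $N$ failures would force termination. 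Hence the "bad" event — non-termination — is contained in the event that the log string $j_1,\dots,j_T$ has Hamming weight $\geq N$, where the $1$'s are precisely the failed measurements and their positions are constrained by the recursion structure.

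Next I would set up the reversibility/compression bookkeeping. The map taking the initial data $(x,y)\in\{0,1\}^n\times\{0,1\}^{kN}$ to the final configuration $(\text{work register } W,\ \text{log } L,\ \text{counter } t,\ \text{unused randomness in }R)$ is a bijection on each history: the log records every measurement outcome, and from the log one can replay the algorithm backwards, undoing each swap-and-rotate $R_i$ and each coherent measurement $C_i$. The crucial gain is that whenever measurement $\Pi_i$ fails, \labelcref{eq:livesinPr} tells us the $k$ swapped-out qubits now live entirely in the rank-$r$ subspace $P_r$, which carries only $\log r$ bits of information instead of the $k$ bits of fresh randomness that were swapped in. So after $N$ failures the randomness register $R$ has effectively shed $N(k-\log r)$ bits, which must be accounted for by the log. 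Counting: the log holds $T$ bits, but conditioned on having exactly $w\geq N$ failures in the allowed recursion pattern, the number of admissible log strings is at most $\binom{T}{w}\,2^{?}$ — actually the structure is tighter, and the standard Moser bound gives at most $2^{m+\log N}\cdot\binom{N d + m}{N}$-type counting, which after the usual manipulation yields that the number of length-$T$ histories with $\geq N$ failures, together with the at most $r^N$-fold ambiguity in the post-failure states, is bounded by $2^{N\log(de)+\,m+\log N}\cdot r^N$ — i.e.\ an entropy of at most $N\log(der)+m+\log N$ bits available to encode the $N(k)$ bits of consumed randomness associated to the failures plus the initial $n$ bits. Balancing, a history with $\geq N$ failures can occur only if $N k \leq N\log(der) + m + \log N$, and since each specific bad history has probability $2^{-(\text{randomness used})}$, summing over the at most $2^{N\log(der)+m+\log N - Nk}$-weighted family of bad histories gives total bad probability $\leq 2^{-N(k-\log der)+m+\log N}$. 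Finally I would note that the argument is history-by-history: the measurement of the log register $L$ at the end of the algorithm projects onto one history, the probability of each history is exactly the squared amplitude $\|P_r^{\ox t}\ket{\varphi}\|^2$, and these are precisely the classical probabilities the counting argument bounds, so the bound on bad histories transfers directly to the quantum measurement outcome.

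The main obstacle I expect is making the counting of admissible log strings genuinely tight — i.e.\ recovering the $\log(der)$ rate rather than a worse constant. Moser's STOC presentation argument lost a little here; the paper advertises a "simple universal method to compress a binary classical bit sequence" as the fix, so the real work is to show that the $T$-bit log, whose $1$'s mark failures constrained to lie in a recursion forest of out-degree $\leq d$ rooted at $\leq m$ top-level calls, can be compressed to roughly $N\log d + O(m+\log N)$ bits — essentially encoding, for each of the $N$ failures, which of its $\leq d$ neighbours is being recursed on, plus $O(\log N)$ overhead for the tree shape. Combined with the $\log r$ per-failure slack from \labelcref{eq:livesinPr}, this gives the clean exponent $N(k-\log(der))$. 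I would prove this compression bound as a standalone combinatorial lemma and then feed it into the reversibility count; the rest (the bijection, the probability-equals-squared-amplitude identification, and the reduction of non-termination to "$\geq N$ failures") is routine once the history picture from \labelcref{eq:superposition3} is in hand.
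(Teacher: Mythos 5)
Your skeleton matches the paper's: non-termination within $T=m+Nd$ steps forces at least $N$ failed measurements; by \cref{eq:livesinPr} each failure confines a fresh $k$-qubit block of $R$ to the rank-$r$ subspace $P_r$; and the log of a history with $N$ failures can be losslessly compressed to $m+N\log(de)$ bits. (Incidentally, the ``main obstacle'' you flag is a non-issue: no encoding of the recursion forest is needed. The paper simply encodes the $T$-bit log string by its index among the length-$T$ strings with exactly $N$ ones, and $\log\binom{m+Nd}{N}\leq m+N\log(de)$ already gives the tight rate.)

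The genuine gap is the step that converts this bookkeeping into a probability bound. You assert that ``each specific bad history has probability $2^{-(\text{randomness used})}$'' and that the squared amplitudes ``are precisely the classical probabilities the counting argument bounds.'' In the quantum setting this is false: the amplitude of a history is determined by the actual projectors $\Pi_i$ and the input state, not by a uniform-resampling picture --- a single coherent measurement can fail with probability close to $1$, not $r/2^k$. Likewise, a single history is not implemented by a bijection; the history operator is a product of projectors and unitaries and is not invertible. What is true, and what the paper proves, is a bound on the probability \emph{averaged over the maximally mixed initial state} of $W$ and $R$: because the whole algorithm (composed with the compression unitary $U_C$) is unitary, and every bad history is mapped into a subspace of dimension at most $2^{n}\cdot r^{N}\cdot 2^{m+N\log(de)}\cdot N$ ($W$ unconstrained, $R$ confined to $P_r^{\otimes N}$, compressed log, count register), one gets
\begin{equation*}
  \tr\!\left[P_N\, U\rho_0 U^\dagger\right] \;\leq\; 2^{-(n+kN)}\,\tr P_N \;=\; 2^{-N(k-\log der)+m+\log N},
\end{equation*}
which is exactly the sharp special case of the strong converse of the typical subspace theorem (\cref{lem:subentropic2}) that your sketch never actually performs. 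Once you replace your per-history ``probability $=2^{-(\text{randomness used})}$'' claim by this trace/dimension bound against the maximally mixed input (either for the single projector $P_N$, as the paper does, or history by history, which even saves the $\log N$ term), the rest of your argument goes through and coincides with the paper's proof of \cref{thm:success}.
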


\begin{proof}
  The proof rests on three simple facts: (i)~The initial state is maximally-mixed on $n + kN$ qubits (tensor a pure state on the rest). (ii)~The algorithm is unitary. (iii)~If a total of $M$ violations occurred, the information stored in the log register $L$ can be compressed to $m + M\log(de)$ qubits.

  Consider a computational basis state $\ket[L]{\sigma}\ket[t]{M}$ of the log and count registers, describing a particular (classical) history $\sigma$ with a total of $M$ violations. Since the count register is incremented each time the algorithm measures a violation, $\sigma$ must contain exactly $M$ $1$s. By encoding $\sigma$ as the \emph{index} $\iota(\sigma;M)$ of $\sigma$ in the lexicographically-ordered set of length-$N$ bit strings that contain exactly $M$ ones, we could losslessly and deterministically compress $\sigma$ to $m + M\log(de)$ bits~\cite[Ch.~13.2]{CT06}. (Note that we do not need to actually perform this compression step as part of the algorithm; it is sufficient that it is possible.) By linearity, we can extend this lossless compression to a unitary operation on the log and count registers $L$ and $t$:
  \begin{equation}
    U_C \ket[L]{\sigma}\ket[t]{M}
    = 
      \underbrace{\ket{\iota(\sigma;M)}\ket{0}^{\ox (T - m - M\log(de))}}_L
      \underbrace{\ket{M}}_t.
  \end{equation}

  Furthermore, since $M$ violations occurred, each of the $M$ subsystems of $k$ qubits in the register $R$ only has support on the $r$-dimensional subspace $P_r$ (in the respective subsystem) by \cref{eq:livesinPr}. Given this, if we apply $U_C$ to the state of the log and count registers $L$ and $t$, the following projector projects onto measurement histories with $M=N$:
  \begin{samepage}
  \begin{gather}
    P_N = U_C^\dag P U_C,\\
    \intertext{where}
    P = \underbrace{ \id^{\otimes n} \vphantom{\ket{0}^{\ox k}} }_W
        \ox \underbrace{ {\phantom{|}P_r}^{\otimes N} }_R
        \ox \underbrace{ \id^{\ox m + N\log de} \ox \proj{0}^{\ox(T-m-N\log de)} }_L
        \ox \underbrace{ \id^{\ox \log N} \vphantom{\ket{0}^{\ox k}} }_t.
  \end{gather}
  \end{samepage}

  \noindent Meanwhile, from \cref{eq:initialstate}, the initial state of the registers is
  \begin{equation}
    \rho_0 = \frac{1}{2^{n+kN}}\sum_{x,y} \ket{\psi_0^{x,y}}\bra{\psi_0^{x,y}} = \frac{\id_W}{2^n} \ox \frac{\id_R}{2^{kN}}
      \ox \proj[L]{0} \ox \proj[l]{0} \proj[t]{0}.
  \end{equation}
  Let $U$ denote the overall unitary describing the algorithm. The probability of measuring $P_N$ on the final state of the algorithm is then\footnote{Note that this inequality is none other than a sharp version of the strong converse of the typical subspace theorem \cite[Lemma I.9]{winter99}, for the simple case of the maximally mixed state.}
  \begin{equation}
    \begin{split}
      \tr[P_N U\rho_0 U^\dag]
        &= 2^{-n-kN} \tr\left[P U_C U(\id_{WR}\ox\proj[L,l,t]{0})U^\dag U_C^\dag\right]\\
        &\leq 2^{-n-kN} \tr P
        = 2^{-N(k-\log der) + m + \log N}.
    \end{split}
  \end{equation}

  Now, any measurement sequence where \emph{less} than $N$ measurements failed must have terminated early, since the total number of measurement steps $T=m+Nd$ is clearly sufficient to return from any recursion with less than $N$ failed measurements (cf.\ \cref{alg:moser}). Thus the projector $\id - P_N$ projects onto histories in which the sequence of measurement outcomes terminated, and the lemma follows.
\end{proof}

Choosing $N=O\left(\frac{m+\log(\frac{1}{\varepsilon})}{k-\log(der)}\right)$ in \cref{thm:success} suffices to produce the desired output state in register~$W$ with success probability $1-\varepsilon$. Together with \cref{lem:recursionsimple}, this proves Properties~\labelcref{it:recursion_success,it:exp_success_prob}, and hence \cref{thm:escQLLL}.

\section{Conclusions} \label{sec:conclusion}
We have presented a quantum generalization of Moser's algorithm and information-theoretic analysis to efficiently construct a zero-energy ground state of certain local Hamiltonians. The existence of such ground states has been established by the non-constructive Quantum Lovász Local Lemma \cite{AKS12}. Our algorithm requires the additional assumption that the Hamiltonian is a sum of commuting projectors. In fact, for this special case, our algorithm is a \emph{constructive proof} of the Quantum Lovász Local Lemma, as our argument does not depend on the non-constructive result of \cite{AKS12}. After completion of this work, we have learned about a similar result of Arad and Sattath \cite{AS13}. Their proof uses an entropy-counting argument, which is arguably even simpler, but yields only constant probability of success.

The obvious open question is whether \cref{thm:escQLLL} can be generalized to the non-commuting case. The crucial (and only) place in our proof where commutativity is used and where the argument fails is \cref{lem:recursionsimple} (see also \cref{lem:recursion} in \cref{sec:details}). If the quantum algorithm is executed with non-commuting projectors, the present proof still shows that the algorithm terminates, i.e.\ the final measurement will project with high probability onto a subspace of terminated histories after the stated number of iterations. But we are not able to show that the state returned by the algorithm has low energy. Without commutativity, because measurements disturb quantum states, subsystems already checked at higher levels of the recursion may be messed up when fixing lower levels.

A further open question is whether Moser and Tardos' combinatorial proof \cite{MT10} of the Lovász Local Lemma for the more general, asymmetric case can be generalized to the quantum setting. It is interesting to note, that the dissipative algorithm of \cite{VWC09} is precisely the quantum analogue of Moser and Tardos' algorithm for the general, asymmetric Lovász Local Lemma written in the language of CP-maps. Thus, \cite{VWC09} already gives a way to prepare the ground state implied by the non-constructive QLLL \cite{AKS12}. What is missing is an argument supporting a polynomial-time convergence rate of the given CP-map. A first attempt in this direction for the case of commuting projectors has been made by the first and second author in \cite{CS11}. While the specific argument has an unresolved gap in the proof, the general framework based on dissipative CP-maps still appears as a promising approach and might lead to a complete proof in the future.

\section{Acknowledgements}
The authors wish to thank Julia Kempe, Or Sattath, and Robin Moser for valuable discussions.

MS is supported by Austrian SFB grant FoQuS F4014, TSC is supported by the Royal Society, and FV by EU grants
QUERG, and Austrian FWF SFB grants FoQuS and ViCoM.

\bibliography{quantum}{}
\bibliographystyle{alpha}

\appendix

\section{Detailed algorithm and proof} \label{sec:details}
We are now ready to introduce the more detailed quantum \cref{alg:quantummoser}. \Cref{alg:quantummoser}
is a manifestly unitary version of \cref{alg:moser} expanding all quantum registers necessary for bookkeeping,
unrolling the recursion into a unitary loop, and uncomputing auxiliary variables whenever
necessary for the rigorous argument. Furtheremore, we explicitly bound the number of iterations
necessary, such that with high probability all relevant histories in superposition have actually returned
from the (unrolled) recursion and terminated individually.

As already mentioned, our goal is to construct a unitary version of Moser's algorithm. Since projective measurements
are not unitary and can only be performed at the end of a standard quantum circuit, our approach is to
replace them by \emph{coherent measurements} \cite[Ch. 5.4]{wilde13}.
A coherent measurement of a binary observable $\{\Pi_i^0, \Pi_i^1\}$, with $\Pi_i^0+\Pi_i^1=\id$,
on a subsystem will correlate the state of a target qubit with the two possible measurement outcomes
in a unitary way.
This coherent measurement operation is performed by the following operator that is easily checked to be unitary:
\be \label{eq:coherent_measurement}
C_i = \Pi_i^0 \otimes \id + \Pi_i^1 \otimes X
\ee
where $X$ is the Pauli matrix $\sigma_x$.

\Cref{alg:quantummoser} operates on a quantum system consisting of register $W,R,L,F,term,
S,s,l,t,live$ summarized in \cref{fig:registers} at the end of the paper. We assume registers $W,R$ are initialized
in the completely mixed state. Register $W$ is the \emph{work register} in which our algorithm
will prepare a state $\sigma$ satisfying the symmetric QLLL conditions. Register $R$ is the source
of randomness that is fed into the work register by the algorithm appropriately. Register
$L$ is called the \emph{log register} holding an array of qubits $\ket{j_1,\dots,j_T}$ that store the binary coherent
measurement outcomes for a chosen projector $\Pi_i$ in each iteration of the algorithm.
Register $F$ is an array recording whether a recursion
level has terminated. While the information in this register is strictly redundant (relative to $L$), we find it
necessary to first compute and later uncompute the contents of this register to achieve an \emph{efficient}
unitary implementation of the algorithm that is provably correct up to the symmetric QLLL condition simultaneously.
Register $term$ is an array of qubits used to signal the termination of a measurement history in the
coherent superposition of histories. Once the qubit $term[l]$ is set to $\ket{1}$ in iteration $l$ in a particular history,
further iterations will just be idle in that history until the overall algorithm terminates. The stack
$S$ is an array of \emph{pairs} of registers, $proj$ and $nbr$. At recursion level $i$, register $S[i].proj$ refers to
the projector $\pi_{S[i].proj}$ being fixed in level $i$, where $S[i].nbr$ indicates the index (relative $S[i].nbr$)
of the \emph{neighboring} projector currently being verified ($0..k-1$). To simplify the presentation of
the algorithm, we treat the top-level of the recursion by pretending that some fiduciary clause had failed
that intersected with all clauses. In this way we can deal with the top-level
iteration just like with any other level. To this effect we initialize the content of register $S[0].proj = 0$
and define the special projector $\Pi_0$
to act non-trivially on \emph{all} $n$ qubits intersecting with \emph{all} projectors $\{\Pi_i\}_{1 \leq i \leq m}$.
This defines the top level of the recursion. Register $s$ is the stack pointer
referring to the current recursion level. Register $l$ is the log pointer, indicating the current iteration
of procedure \emph{iteration()} and the target qubit $L[l]$ for the coherent measurement in that iteration.
Register $t$ is the randomness pointer. It counts the number of failed measurements in a particular
measurement history and points to the next available block of $k$ random qubits starting at $R[tk]$.
Finally, register $live$ is a parameter to procedure \emph{iteration()} controlling whether operations
among the $W$ subsystem and the rest of the system should be performed ($live=1$) or skipped ($live=0$).
This is used to facilitate uncomputation of redundant information in the above registers.

We will now describe the operation of \cref{alg:quantummoser} in detail.
It consists of two procedures. The main procedure \emph{QLLL\_solver()} (\cref{proc:QLLLsolver}),
and procedure $\emph{iteration()}$ (\cref{proc:iteration}), which is called from \emph{QLLL\_solver()}.
\emph{QLLL\_solver()} starts by executing procedure \emph{iteration()} $T$ times in the forward and $T$ times
in the reverse direction, as indicated by the dagger symbol in \cref{line:reverse}. In the forward direction
procedure \emph{iteration()} (invoked with $live=1$)
applies a coherent
measurements of one of the $k$-QSAT projectors to the assignment register $W$ and stores the coherent
measurement outcome at the current position $l$ in the log register $L$. Based on the measurement outcome,
the stack and other bookkeeping registers are updated coherently as well. During the uncomputation phase
we invoke procedure \emph{iteration()} with parameter $live$ set to $0$ such that all bookkeeping registers are
uncomputed, \emph{except} the log $L$ itself as the coherent ``unmeasurements" are skipped.
Indeed, the contents of the large $F$ and $term$ registers has been completely uncomputed, as they
can be reconstructed from $L$ alone.
Note, that after the completion of the reverse iterations (before executing \cref{line:compress}),
all registers are back to their initial states, except the $W$, $L$, and $R$ registers.
Once all redundancy in the bookkeeping registers has been removed by uncomputation, procedure \emph{compress()}
compresses the $R,L,t$ registers as explained in more detail in the next section.
The function will return with the quantum state of register $t$ recomputed. Finally
a projective measurement on the subspace of histories with $t<N$ failed measurements
is performed, in which case the algorithm returns SUCCESS and the subsystem $W$ of quantum state
$(\id-P_{N})\sigma(\id-P_{N})$, or FAILURE otherwise.

We will now describe the procedure \emph{iteration()}. Unless the algorithm has terminated
(or the function is not called with $live=1$) each iteration of the algorithm performs exactly
one coherent measurement (\cref{line:cmeasurement}) and all necessary update actions on
the state variables to simulate the recursive procedure of Moser's algorithm.
Since the measurement is coherent, the \emph{execution splits into a superposition} of two possible measurement
outcomes whenever this line of the algorithm is executed, unless all projectors are classical.
In the case that the measurement \emph{fails} (and \emph{iteration()} is called with $live=1$) the
procedure \emph{swap\_and\_rotate()} (\cref{line:swapandrotate})
is invoked, denoted $R_i$ below.

We are free to restrict our analysis to one particular history $\ket{j_1,\dots,j_l}$
since the quantum state is just a superposition of all possible such histories. To see this,
we proof the following
\begin{lemma} \label{lem:superposition}
For any initial state
\begin{align}
\ket{\psi_0^{x,y}} &= \ket{x}^W \ket{y}^R\ket{0_1,\dots,0_T}^L \ket{0}^l{\ket{0}}^t{\ket{0}}^{F,S,s,live,term}
\end{align}
with randomly chosen bit strings $x,y$, the quantum state produced by \cref{alg:quantummoser}
after $T>0$ iterations has the following structure:
\begin{align}
\ket{\psi_T^{x,y}} &=\sum_{t=0}^T P_r^{\ox t} \sum_{\substack{j_1+\dots+j_T=t \\ j_i \in \{0,1\}}}  \ket{\varphi_{j_1,\dots,j_T}}^{W,R} \ket{j_1,\dots,j_T}^L \ket{T}^l{\ket{t}}\ket{z_{j_1,\dots,j_T}}^{F,S,s,live,term} \label{eq:finalform}
\\
&=\sum_{j_1,\dots,j_T \in \{0,1\}}  P_r^{\ox t_{j_1,\dots,j_T}} \ket{\varphi_{j_1,\dots,j_T}}^{W,R}\ket{j_1,\dots,j_T}^L \ket{T}^l{\ket{t_{j_1,\dots,j_T}}}\ket{z_{j_1,\dots,j_T}}
\end{align}
where $t_{j_1,\dots,j_T}=\sum_{i=1}^T j_i$, and
where $P_r^{\ox t}$ acts only non-trivially on the first $kt$ qubits of register $R$.
That is, the state can be written as a (non-uniform) superposition of
$2^T$ orthogonal states enumerating all $T$-bit computational basis states $\ket{j_1,\dots,j_T}$ in the $L$ register,
each of which is entangled with some quantum state $\ket{\varphi_{j_1,\dots,j_T}}$ in the $W$ and $R$ registers,
and \emph{computational basis states} in the $t,F,S,s,live$, and $term$ registers. Furthermore, the $R$-register
components of this state live in a subspace of rank at most $rk(P_r^{\otimes t})=r^t$, where $t=\sum_{i=1}^T j_i$.
\end{lemma}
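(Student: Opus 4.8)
The plan is to prove the claimed form by induction on the number of iterations $T$, tracking how a single application of \emph{iteration()} acts on a state of the asserted shape. The base case $T=0$ is immediate from the initial state $\ket{\psi_0^{x,y}}$, which is already of the stated form with the single history $j_1,\dots,j_0$ being empty, $t=0$, and all bookkeeping registers in their initial computational-basis configuration. For the inductive step, I would assume the state after $T$ iterations is $\ket{\psi_T^{x,y}}$ as in \cref{eq:finalform}, and analyze the action of the $(T{+}1)^{\text{st}}$ call to \emph{iteration()}. The key observation, already flagged in the surrounding text, is that everything \emph{iteration()} does apart from the coherent measurement $C_i$ and the swap-and-rotate $R_i$ is classical reversible processing determined by the contents of $L$ (and the other bookkeeping registers, which are themselves functions of $L$); hence it is diagonal in the computational basis and merely permutes/updates the basis states $\ket{j_1,\dots,j_T}\ket{T}^l\ket{t}\ket{z_{j_1,\dots,j_T}}$ consistently within each history.

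The substantive part is the coherent measurement step. Fix a history with log contents $j_1,\dots,j_T$; the function $f$ deterministically selects the projector index $i = f(j_1,\dots,j_T)$ to be (coherently) measured, and $C_i = \Pi_i^0 \ox \id + \Pi_i^1 \ox X$ acts on $W_i$ and the fresh log qubit $L[T{+}1]$, which is in state $\ket{0}$. This splits that history into two: the ``$0$'' branch carries $(\id-\Pi_i)\ket{\varphi_{j_1,\dots,j_T}}$ with log bit $j_{T+1}=0$ and $t$ unchanged, while the ``$1$'' branch carries $\Pi_i\ket{\varphi_{j_1,\dots,j_T}}$ with $j_{T+1}=1$. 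On the ``$1$'' branch, \emph{swap\_and\_rotate()} applies $R_i$: the $k$ measured qubits (now living in the $\Pi_i$ subspace) are swapped into block $R_t$ and rotated by $U_i$ into the fixed subspace $P_r$, after which $t$ is incremented. By \cref{eq:livesinPr}, the new block of $R$ then has support only on $P_r$, so the total number of $P_r$-constrained blocks in that history equals the number of $1$s in $j_1,\dots,j_{T+1}$, which is exactly $t_{j_1,\dots,j_{T+1}}$. Since each coherent measurement writes to a \emph{fresh} log qubit, the $2^{T+1}$ resulting log strings are pairwise orthogonal, and because the bookkeeping registers are deterministic functions of the log string, orthogonality of the full terms is preserved. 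Collecting terms over all $j_1,\dots,j_{T+1}\in\{0,1\}$ yields exactly the claimed form for $T{+}1$, with the $R$-register component of each history supported on a subspace of dimension at most $r^{t_{j_1,\dots,j_{T+1}}}$.

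I also need to address the case where a history has already terminated before iteration $T{+}1$: there $term[l]=\ket{1}$ for some $l\le T$, the guard in \emph{iteration()} causes it to be idle (the coherent measurement and $R_i$ are skipped), the fresh log qubit stays $\ket{0}$, so $j_{T+1}=0$, and $t$ and $\ket{\varphi}$ are unchanged — consistent with the claimed form. The main obstacle I anticipate is not conceptual but bookkeeping: one must check carefully that the content $z_{j_1,\dots,j_T}$ of the $F,S,s,live,term$ registers is genuinely a function of the log string $j_1,\dots,j_T$ alone (this is what makes the histories stay orthogonal and lets the analysis be ``per-history''), and that the idle/terminated branches and the $live=0$ uncomputation passes in \emph{QLLL\_solver()} do not spoil this — i.e.\ that at the point \cref{lem:superposition} describes (after $T$ forward iterations), the state really has the asserted tensor-product structure across $W,R$ versus $L,t,F,S,s,live,term$. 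Once that invariant is stated precisely, the induction goes through mechanically using only unitarity of $C_i$, $R_i$, and the classical-reversible updates, together with \cref{eq:livesinPr} for the $P_r$ support claim.
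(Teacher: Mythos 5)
Your proposal is correct and follows essentially the same route as the paper's own proof: an induction over the number of iterations, analyzing each history separately, expanding the coherent measurement and swap-and-rotate per branch, using \cref{eq:livesinPr} to extract the $P_r$ support on the newly filled block of $R$, and treating all remaining bookkeeping (including the idle/terminated branches) as classical, computational-basis-diagonal updates determined by the log string, which is exactly how the paper justifies writing the bookkeeping state as $\ket{z_{j_1,\dots,j_l}}$. The only cosmetic difference is your base case $T=0$ versus the paper's explicit $l=1$ computation.
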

\begin{proof}
The proof proceeds by induction over $T$. The initial state of the algorithm (i.e. $T=0$ iterations) is
\begin{align}
\ket{\psi_0} &= \ket{x}^W \ket{y}^R\ket{0_1,\dots,0_T}^L \ket{0}^l{\ket{0}}^t{\ket{0}}^{F,S,s,live,term}
\end{align}
We claim that after $1\leq l\leq T$ iterations the state has the following slightly more general structure:
\begin{align}
\ket{\psi_l} =\sum_{t=0}^l P_r^{\ox t} \sum_{\substack{j_1+\dots+j_l=t \\ j_i \in \{0,1\}}} \ket{\varphi_{j_1,\dots,j_T}}^{W,R} \ket{j_1,\dots,j_l,0_{l+1},\dots,0_T}^L \ket{l}{\ket{t}}\ket{z_{j_1,\dots,j_l}}^{F,S,s,live,term}  \label{eq:generalform}
\end{align}
Clearly, for $l=T$ the lemma follows. To prove the base case $l=1$, notice that after the first iteration the state evolves to
\begin{align}
\ket{\psi_1}=R_0 C_0 \ket{\psi_0} &= R_0 C_0 \ket{x,y}^{W,R}\ket{0_1,\dots,0_T}^L \ket{0}^l{\ket{0}}^t{\ket{0}}^{F,S,s,live,term} \\
&=R_0 \Pi_0^0 \ket{x,y}^{W,R}\ket{0_1,0_{2}\dots,0_T}^L \ket{1}^l{\ket{0}}^t\ket{z_{0}} \label{eq:appCM}\\
&+R_0 \Pi_0^1 \ket{x,y}^{W,R}\ket{1_1,0_{2}\dots,0_T}^L \ket{1}^l{\ket{1}}^t\ket{z_{1}} \nonumber \\
&=R_0 \ket{\varphi_{0}}^{W,R}\ket{0_1,0_{2}\dots,0_T}^L \ket{1}^l{\ket{0}}^t\ket{z_{0}} \label{eq:proj} \\
&+R_0 \ket{\varphi'_{1}}^{W,R}\ket{1_1,0_2\dots,0_T}^L \ket{1}^l{\ket{1}}^t\ket{z_{1}} \nonumber \\
&= \ket{\varphi_{0}}^{W,R}\ket{0_1,0_{2}\dots,0_T}^L \ket{1}^l{\ket{0}}^t\ket{z_{0}} \label{eq:appR} \\
&+P_r \ket{\varphi_{1}}^{W,R}\ket{1_1,0_2\dots,0_T}^L \ket{1}^l{\ket{1}}^t\ket{z_{1}} \nonumber \\
&= P_r^{\ox 0} \ket{\varphi_{0}}^{W,R}\ket{0_1,0_2,\dots,0_T}^L \ket{1}^l{\ket{0}}^t\ket{z_{0}} \label{eq:fiducial}\\
&+ P_r^{\ox 1} \ket{\varphi_{1}}^{W,R}\ket{1_1,0_2,\dots,0_T}^L \ket{1}^l{\ket{1}}^t\ket{z_{1}} \nonumber \\
&=\sum_{t=0}^1 P_r^{\ox t}\sum_{\substack{j_1=t \\ j_1 \in \{0,1\}}}  \ket{\varphi_{j_1}}^{W,R}\ket{j_1,0_{2}\dots,0_T}^L \ket{1}^l{\ket{t}}\ket{z_{j_1}} \label{eq:sum1}.
\end{align}
where in \cref{eq:appCM} we expand the coherent measurement $C_0$ using \cref{eq:coherent_measurement}. We denote classical
bookkeeping states in registers $F,S,s,L,live,term$ collectively as $\ket{z_i}$ henceforth. Note that the projectors
act on $W$ while $\id$ and $X$ act on qubit $l=0$ in $L$, respectively. We see that the state splits into a superposition of
two states, with orthogonal states in qubit $\ket{j_1}$. In \cref{eq:proj} we label the projected states by $\Pi_0^0 \ket{x,y}=\ket{\varphi_0}$, and $\Pi_0^1 \ket{x,y}=\ket{\varphi_1'}$. In \cref{eq:appR} we apply the \emph{swap\_and\_rotate()} operation $R_0$, which acts at the identity on the first term. On the second term, the projected qubits are swapped from the $W$ into the $R$ register and then rotated into the $P_r$ subspace, transforming the state into
\begin{equation}
\ket{\varphi_1} = U_0^{R_0}\cdot U_{\text{SWAP}}^{W_0R_0} \ket{\varphi_1'} = U_0^{R_0}\cdot U_{\text{SWAP}}^{W_0R_0}\Pi_0^1 \ket{x,y}=P_r^{W_0} U_0^{R_0}\cdot U_{\text{SWAP}}^{W_0R_0}\ket{x,y}.
\end{equation}
which follows from \cref{eq:livesinPr,eq:defRi}. Furthermore, this also implies that $\ket{\varphi_1}=P_r^{R_0} \ket{\varphi_1}$, so we are justified in explicitly extracting the projector $P_r$ in \cref{eq:appR}.
In \cref{eq:fiducial} we insert the fiducial projector $P_r^{\otimes 0} = \id$ in order to rewrite the equation into a sum of the desired structure in \cref{eq:sum1}. Thus, the state $\ket{\psi_1}$ has the required structure with $l=1$, which proves the base case.

In subsequent iterations, we denote the operations of \cref{alg:quantummoser} by operators $C_{j_1,\dots,j_l}$ (\emph{coherent measurement}), and $R_{j_1,\dots,j_l}$ (\emph{swap\_and\_rotate}), respectively. These are controlled by the content of the $L$ and $l$ registers. All further bookkeeping operations are to be considered to be part of $R_{j_1,\dots,j_l}$. We need to show that the state has the structure of \cref{eq:generalform} for all $l$. This is indeed the case, since
{\small
\begin{align}
\ket{\psi_{l+1}} &= R_{j_1,\dots,j_l} C_{j_1,\dots,j_l} \ket{\psi_l}\\
&= R_{j_1,\dots,j_l} C_{j_1,\dots,j_l} \sum_{t=0}^l P_r^{\ox t} \sum_{\substack{j_1+\dots+j_l=t \\ j_i \in \{0,1\}}} \ket{\varphi_{j_1,\dots,j_T}}^{W,R} \ket{j_1,\dots,j_l,0_{l+1},\dots,0_T}^L \ket{l}{\ket{t}}\ket{z_{j_1,\dots,j_l}}
\end{align}
\begin{align} \label{eq:appCM2}
=R_{j_1,\dots,j_l} \sum_{t=0}^l P_r^{\ox t} \sum_{\substack{j_1+\dots+j_l=t \\ j_i \in \{0,1\}}} (\Pi_{j_1,\dots,j_l}^0  &\ket{\varphi_{j_1,\dots,j_l}}^{W,R}\ket{j_1,\dots,j_l,0,0_{l+2}\dots,0_T}^L \ket{l+1}{\ket{t}}\ket{z_{j_1,\dots,j_l}} \nonumber\\
+\Pi_{j_1,\dots,j_l}^1 &\ket{\varphi_{j_1,\dots,j_l}}^{W,R} \ket{j_1,\dots,j_l,1,0_{l+1}\dots,0_T}^L \ket{l+1}{\ket{t+1}}\ket{z_{j_1,\dots,j_l}})
\end{align}
\begin{align} \label{eq:proj2}
=R_{j_1,\dots,j_l}\sum_{t=0}^l P_r^{\ox t} \sum_{\substack{j_1+\dots+j_l=t \\ j_i \in \{0,1\}}} (&\ket{\varphi_{j_1,\dots,j_l,0}}^{W,R}\ket{j_1,\dots,j_l,0,0_{l+2}\dots,0_T}^L \ket{l+1}{\ket{t}}\ket{z_{j_1,\dots,j_l}}\nonumber\\
+&\ket{\varphi'_{j_1,\dots,j_l,1}}^{W,R} \ket{j_1,\dots,j_l,1,0_{l+1}\dots,0_T}^L \ket{l+1}{\ket{t+1}}\ket{z_{j_1,\dots,j_l}})
\end{align}
\begin{align} \label{eq:appR2}
=\sum_{t=0}^l P_r^{\ox t} \sum_{\substack{j_1+\dots+j_l=t \\ j_i \in \{0,1\}}} (&\ket{\varphi_{j_1,\dots,j_l,0}}^{W,R}\ket{j_1,\dots,j_l,0,0_{l+2}\dots,0_T}^L \ket{l+1}{\ket{t}}\ket{z_{j_1,\dots,j_l}}\nonumber\\
+{P_r}^{R_{t}}&\ket{\varphi_{j_1,\dots,j_l,1}}^{W,R} \ket{j_1,\dots,j_l,1,0_{l+1}\dots,0_T}^L \ket{l+1}{\ket{t+1}}\ket{z_{j_1,\dots,j_l}})
\end{align}
\begin{align} \label{eq:final2}
=\sum_{t=0}^{l+1} P_r^{\ox t} \sum_{\substack{j_1+\dots+j_{l+1}=t \\ j_i \in \{0,1\}}} &\ket{\varphi_{j_1,\dots,j_l,j_{l+1}}}^{W,R}\ket{j_1,\dots,j_l,j_{l+1},0_{l+2}\dots,0_T}^L \ket{l+1}{\ket{t}}\ket{z_{j_1,\dots,j_l}}
\end{align}
}
\noindent where, again, in \cref{eq:appCM2} we expand the coherent measurement $C_{j_1,\dots,j_l}$ using \cref{eq:coherent_measurement}, where the projectors act on $W$ while $\id$ and $X$ act on qubit $l$ in $L$, respectively. We see that the state splits into a superposition of two states orthogonal in the state of this qubit. Register $l$ is increased by one in both states. In \cref{eq:proj2} we label the projected states by $\Pi_{j_1,\dots,j_l}^0 \ket{\varphi_{j_1,\dots,j_l}}=\ket{\varphi_{j_1,\dots,j_l,0}}$, and $\Pi_{j_1,\dots,j_l}^1 \ket{\varphi_{j_1,\dots,j_l}}=\ket{\varphi_{j_1,\dots,j_l,1}'}$. In \cref{eq:appR2} we apply the \emph{swap\_and\_rotate()} operation $R_{j_1,\dots,j_l}$, which acts at the identity on the first term.
On the second term, the projected qubits are swapped from the $W$ into the $R$ register and then rotated into the $P_r$ subspace, transforming the state into
\begin{equation}
\ket{\varphi_{j_1,\dots,j_l,1}} = U_i^{R_t}\cdot U_{\text{SWAP}}^{W_iR_t} \ket{\varphi_{j_1,\dots,j_l,1}'} = U_i^{R_t}\cdot U_{\text{SWAP}}^{W_iR_t}\Pi_0^1 \ket{\varphi_{j_1,\dots,j_l}}=P_r^{W_0} U_i^{R_t}\cdot U_{\text{SWAP}}^{W_iR_t}\ket{\varphi_{j_1,\dots,j_l}}.
\end{equation}
which follows from \cref{eq:livesinPr,eq:defRi}. Furthermore, this also implies that $\ket{\varphi_{j_1,\dots,j_l,1}}=P_r^{R_{t}} \ket{\varphi_{j_1,\dots,j_l,1}}$, thus we are justified in explicitly extracting the projector $P_r$ in \cref{eq:appR}.
Finally, in \cref{eq:final2} we rewrite the state by adding the binary index $j_{l+1}$ in the inner sum. Furthermore, we sum $t$ up to $l+1$ accommodating the additional measurement. Evidently, the state has now the form claimed for $\ket{\psi_{l+1}}$. By induction, the state has the required form of \cref{eq:finalform} for all $1 \leq l \leq T$, yielding the lemma.
\end{proof}

\subsection{Proof of \protect\cref{thm:escQLLL}}
\begin{proof}
By \cref{lem:superposition} we know that after $T$ iterations of \cref{alg:quantummoser} the state has the form
\begin{align}
\ket{\psi_T^{x,y}} &=\sum_{t=0}^T P_r^{\ox t} \sum_{\substack{j_1+\dots+j_T=t \\ j_i \in \{0,1\}}}  \ket{\varphi_{j_1,\dots,j_T}}^{W,R} \ket{j_1,\dots,j_T}^L \ket{T}^l{\ket{t}}\ket{z_{j_1,\dots,j_T}}^{F,S,s,live,term} \label{eq:finalform2}
\end{align}
After uncomputing the redundant registers, this simplifies to
\begin{align}
\ket{\psi_U^{x,y}} &=\sum_{t=0}^T P_r^{\ox t} \sum_{\substack{j_1+\dots+j_T=t \\ j_i \in \{0,1\}}}  \ket{\varphi_{j_1,\dots,j_T}}^{W,R} \ket{j_1,\dots,j_T}^L \ket{T}^l{\ket{t}}\ket{0}^{F,S,l,s,live,term} \label{eq:finalform3}
\end{align}
One way to view state $\ket{\psi_U^{x,y}}$ is as a superposition of all possible measurement histories $j_1,\dots,j_T$, which were the result if we had performed projective rather than coherent measurements. By the \emph{principle of deferred measurement} \cite{NC00}, we can still measure all qubits in $L$ to project onto one of these histories.  Consequently, we call each term in the sum of \cref{eq:finalform2} a \emph{history} and identify histories by the outcomes $\ket{j_1,\dots,j_T}$ in register $L$.

Let us make a few observations about each history $\ket{j_1,\dots,j_T}^L$.
If $j_1,\dots,j_T$ contains $t$ failed measurement outcomes, we know from \cref{lem:superposition}
that \cref{alg:quantummoser} has
projected the first $t$ blocks of $k$ qubits in $R$ into the subspace $P_r^{\ox t}$.
\Cref{line:forcedtermination} enforces that $t \leq N \leq (T-m)/d$,
i.e. a maximum number $N$ of failed measurements, which we will choose later on. Thus by terminating execution once the
maximum admissible number of $N$ failed measurements has been reached, we
accept that some histories in superposition in $\ket{\psi_T^{x,y}}$ may not have returned from the recursion.
On the other hand, for all histories with $t < N$ it is clear that they must have returned
to the top-level of the recursion and terminated at iteration $T=m+dt$, since to the $m$ top-level measurements
exactly $d$ more measurements are added for each of the $t$ failed outcomes. Therefore, within the $T$
bits of $L$ at most $t$ bits are in state $\ket{1}$. Thus the Shannon entropy of
bit string $j_1,\dots,j_T$ relative to $t$ is at most
$\log{m+dt \choose t} \leq m+\log{dt \choose t} \leq m + \log{(\frac{det}{t})^{t}} = m + t\log(de)$
bits. By encoding $L$ by the \emph{index} of $j_1,\dots,j_T$ in the lexicographically ordered set
of bit strings of length $T$ with $t$ ones we can achieve compression of $L$ to the above bound,
relative to $t$ \cite[Ch. 13.2]{CT06}. This classical compression is performed reversibly in \cref{line:compress} by procedure $compress(j_1,\dots,j_T,t)$
for each history $\ket{j_1,\dots,j_T}$, in superposition.\footnote{Note a minor technicality: at the instant \emph{compress(L,t)} is invoked, the $t$ register has actually been uncomputed (like all other auxiliary variables) and must be recomputed within the function by simply counting the $t \leq N$ ones in each $\ket{j_1,\dots,j_T}$. It could also have been copied before uncomputation.
The recomputed value of $t$ remains in the register as the function returns as the compression $\ket{j_1,\dots,j_T}$ is relative to $\ket{t}$.}
Let us denote
the state after the compression as
\begin{align}
\ket{\psi_C^{x,y}} &= U_{\text{compress}} \ket{\psi_U^{x,y}} = \nonumber \\
&=\sum_{t=0}^T P_r^{\ox t} \sum_{\substack{j_1+\dots+j_T=t \\ j_i \in \{0,1\}}}  \ket{\varphi_{j_1,\dots,j_T}}^{W,R} \left(\ket{L_{j_1,\dots,j_T}}\ket{0}^{\otimes(T-m-t\log(de))}\right)^L {\ket{t}}\ket{0}
\end{align}
where $L_{j_1,\dots,j_T}=compress(j_1,\dots,j_T,t)$.
We formalize our knowledge about $\ket{\psi_C^{x,y}}$ by constructing a projector $P_{M}$ onto the subspace
with $t \geq M$:
\be
P_{M} = \underbrace{\id^{\otimes n}}_W
      \otimes \underbrace{ P_r^{\ox M} \otimes \id^{\otimes (N-M)k} }_R
			\otimes \underbrace{\id^{\otimes m+M\log(de)} \otimes (\ket{0}\bra{0})^{T-m-{M}\log(de)}}_L
			\otimes \underbrace{\left(\sum_{\tau=M}^N \ket{\tau}\bra{\tau}\right)}_t
			\otimes \underbrace{\ket{0}\bra{0}}_{F,S,s,l,live} 
\ee
We now show that for $M > \Omega\left(\frac{m+\log(N)}{k-\log(der)}\right)$ the probability of successfully projecting
the state
\be
\rho_C = \frac{1}{2^{n+Nk}}\sum_{x=0}^{2^n-1}\sum_{y=0}^{2^{Nk}-1} \ket{\psi_C^{x,y}}\bra{\psi_C^{x,y}}
\ee
i.e. $\ket{\psi_C^{x,y}}$ mixed over all $x,y$, onto $P_{M}$ is very low.
Clearly, mixing over $x,y$ injects $n+Nk$ bits of initial entropy.
Let $V=U_{\text{compress}}U_0^{\dag T} U_1^T$, and since $\ket{\psi_C^{x,y}}= V \ket{\psi_0^{x,y}}$, we have
\begin{align}
\rho_C &= \frac{1}{2^{n+Nk}}\sum_{x=0}^{2^n-1}\sum_{y=0}^{2^{Nk}-1} V\ket{\psi_0^{x,y}}\bra{\psi_0^{x,y}}V^\dag \\
&= \frac{1}{2^{n+Nk}}\sum_{x=0}^{2^n-1}\sum_{y=0}^{2^{Nk}-1} V(\ket{x}\bra{x}^W \ket{y}\bra{y}^R{\ket{0}\bra{0}}^{L,l,t,F,S,s,live,term})V^\dag \\
& = \frac{1}{2^{n+Nk}} V(\id\otimes\ket{0}\bra{0}) V^{\dag}
\end{align}
We now apply the following simple special case of the \emph{strong converse of the typical subspace theorem} \cite{winter99}
to get an upper bound for the overlap of $\rho_C$ with $P_{M}$. Note, that the following bound for this special case is
slightly stronger than the original bound of \cite{winter99}.
\begin{lemma} \label{lem:subentropic2}
Let $Q$ be a projector on any subspace of $(\mathbbm{C}^2)^{\otimes (n+m)}$ of dimension at most $2^{nR}$,
where $R < 1$ is fixed and $\frac{\id}{2^n} \otimes (\ket{0}\bra{0})^m$ a completely mixed state with pure ancillas. Then,
\be \label{strongerbound}
  \tr\left(Q \;\left(\frac{\id}{2^n} \otimes (\ket{0}\bra{0})^{\otimes m}\right)\right) \leq \tr\left(Q \frac{\id}{2^n}\right)= 2^{-n} \tr(Q) \leq 2^{-n+nR}
\ee
\end{lemma}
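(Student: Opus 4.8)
The plan is to prove the chain of (in)equalities in \cref{strongerbound} one link at a time, all of which are elementary once we observe that the state is maximally mixed on $n$ qubits tensored with a fixed pure state. First I would establish the middle equality $\tr\!\bigl(Q\,\tfrac{\id}{2^n}\bigr)=2^{-n}\tr(Q)$, which is immediate from linearity of the trace and the fact that the ancilla factor $(\proj{0})^{\ox m}$ is a rank-one projector: writing $Q$ in block form with respect to the ancilla register and tracing out, only the $\ket{0}\!\bra{0}$ block contributes, and $\tr\!\bigl(Q(\tfrac{\id}{2^n}\ox(\proj{0})^{\ox m})\bigr)=2^{-n}\tr_{\text{sys}}\!\bigl(\bra{0}Q\ket{0}_{\text{anc}}\bigr)$.

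The inequality on the right, $2^{-n}\tr(Q)\le 2^{-n+nR}$, is then just the hypothesis that $Q$ projects onto a subspace of dimension at most $2^{nR}$, so $\tr(Q)=\operatorname{rk}(Q)\le 2^{nR}$. For the inequality on the left, $\tr\!\bigl(Q(\tfrac{\id}{2^n}\ox(\proj{0})^{\ox m})\bigr)\le\tr\!\bigl(Q\,\tfrac{\id}{2^n}\bigr)$ — where on the right-hand side $\tfrac{\id}{2^n}$ is understood as $\tfrac{\id}{2^n}\ox\id^{\ox m}$ — I would argue via operator monotonicity of the trace against a positive operator: since $Q\ge 0$ and $\tfrac{\id}{2^n}\ox\id^{\ox m}\ge\tfrac{\id}{2^n}\ox(\proj{0})^{\ox m}\ge 0$, subtracting gives $\tr\!\bigl(Q\,(\tfrac{\id}{2^n}\ox(\id-\proj{0})^{\ox m\text{-block}})\bigr)\ge 0$; more precisely $\id^{\ox m}-(\proj{0})^{\ox m}$ is itself a projector, hence positive, and the trace of a product of two positive operators is nonnegative, so the difference of the two sides is $\ge 0$. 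Concatenating the three links yields \cref{strongerbound}.

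I do not expect any genuine obstacle here: the lemma is deliberately the ``simple special case'' of Winter's strong converse, and the only subtlety is bookkeeping — making sure the domains match ($n+m$ qubits throughout), that ``dimension at most $2^{nR}$'' is used only in the last step, and that the comparison $\id^{\ox m}\ge(\proj{0})^{\ox m}$ is applied on the correct tensor factor. If one wanted to avoid even the operator-inequality language, the left inequality can equivalently be seen by noting that $\tr\!\bigl(Q\,\tfrac{\id}{2^n}\bigr)=2^{-n}\tr(Q)=2^{-n}\sum_{a}\tr_{\text{sys}}\!\bigl(\bra{a}Q\ket{a}_{\text{anc}}\bigr)$ summed over a basis $\{\ket{a}\}$ of the ancilla, and each summand is nonnegative since $\bra{a}Q\ket{a}\ge 0$, so dropping all terms but $a=0\cdots0$ only decreases the value — which is exactly the left-hand side. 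This makes all three steps one-line arguments and the proof essentially immediate.
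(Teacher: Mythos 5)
Your proposal is correct and takes essentially the same route as the paper, which simply declares the bound ``immediate'': your three one-line steps (positivity of $Q$ to drop the rank-one ancilla projector, linearity of the trace, and $\tr(Q)\leq 2^{nR}$ from the dimension hypothesis) are exactly the elementary facts that immediacy rests on. Nothing further is needed.
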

\begin{proof}
The proof is immediate in \cref{strongerbound}.
\end{proof}
\noindent Thus we achieve the bound
\begin{align} \label{eq:exponentiallysmall}
\tr(P_{M} \rho_C) &= 2^{-(n+Nk)} \tr(P_{M} V (\id\otimes{\ket{0}\bra{0}}) V^{\dag } )  \\
& \leq 2^{-(n+Nk)} \tr(P_{M} ) \\
& \leq 2^{m+\log(N)-M(k-\log(r)-\log(de))}\\
& \leq 2^{m+\log(N)-M(k-\log(der))}
\end{align}
On the other hand when $N=M$ we conclude, that the projector $(\id-P_{N})$ onto histories with
$t < N$ has overlap exponentially close to $1$ with $\rho_S$. In other words, \cref{alg:quantummoser}
returns SUCCESS in \cref{line:success} with
\be
Pr[SUCCESS, \sigma] \geq 1 - 2^{m+\log(N)-N(k-\log(der))}
\ee
It follows that choosing $N$ such that
\be \label{eq:implicitttilde}
N \geq \frac{m+\log(\frac{1}{\varepsilon})}{k-\log(der)}+\frac{\log(N)}{k-\log(der)}
\ee
suffices to push the error below $1-\varepsilon$.
But this bound for $N$ is not yet explicit. To get an explicit bound we
define
$c=(k-\log(der))^{-1}$, and $d=\frac{m+\log(\frac{1}{\varepsilon})}{k-\log(der)}$, and set (\cref{line:ttilde})
\be
N = d + 3c(\log(d)+1)
\ee
or, equivalently but more verbosely,
\be
N =  \frac{m+\log(\frac{1}{\varepsilon})}{k-\log(der)} + \frac{3(\log(\frac{m+\log(\frac{1}{\varepsilon})}{k-\log(der)})+1)}{k-\log(der)}
\ee
satisfying \cref{eq:implicitttilde} as shown in \cref{lem:explicitttilde} in the appendix.
Thus we conclude that after $T = m+Nd$ (\cref{line:T}) iterations of \cref{alg:quantummoser},
\be
Pr[SUCCESS, \sigma] \geq 1 - \varepsilon
\ee
as claimed.

In summary, we have shown that either the algorithm achieves a
compression of its state below the entropy of the initial state, which is unlikely, or in all histories
in superposition the number of failed measurements is upper bounded by $N$ and thus the histories must have terminated in the
state returned by \cref{alg:quantummoser}. Furthermore, the probability of the latter outcome can be pushed exponentially close to $1$. All that is left to show is that the state, once projected into the $(\1-P_{N})$ subspace, satisfies the
symmetric QLLL condition. By \cref{lem:recursion} shown below we know that each terminated history $j_1,\dots,j_T$ is correlated to a state $\ket{\varphi_{j_1,\dots,j_T}}$ with energy exactly zero. Thus it follows that the $W$ subsystem of state $(\id-P_{N})\rho_C(\id-P_{N})$ returned by \cref{alg:quantummoser} is just a mixture of zero energy states and has thus energy zero itself, which completes the proof, i.e. formally let
\begin{align}
\rho_P = \frac{(\1-P_{N})\rho_C (\1-P_{N})}{1-\tr(P_{N}\rho_C)}
\end{align}
where the denominator is exponentially close to $1$ due to \cref{eq:exponentiallysmall}. Then, expanding the definition of $\rho_C$ and recognizing that the projector on $\1-P_{N}$ just changes the upper bound of the sum over $t$ (and $t'$) from $T$ to $N$, we have
\begin{align}
\tr_{\overline{W,R}}(\rho_P)\propto& \tr_{\overline{W,R}} ((\1-P_{N})\rho_C(\1-P_{N})) \\
=&\tr_{\overline{W,R}}((\1-P_{N})\ket{\psi_S^{x,y}}\bra{\psi_S^{x,y}}(\1-P_{N}))\\
=&\frac{1}{2^{n+Nk}}\sum_{x=0}^{2^n-1}\sum_{y=0}^{2^{Nk}-1} \sum_{t=0}^{N}\sum_{\substack{j_1+\dots+j_{N}=t \\ j_i \in \{0,1\}}} \sum_{t'=0}^{N}\sum_{\substack{j'_1+\dots+j'_{N}=t' \\ j'_i \in \{0,1\}}}
\\ &\tr_{\overline{W,R}}(\ket{\varphi_{j_1,\dots,j_{N}}}\bra{\varphi_{j'_1,\dots,j'_{N}}}^{W,R}
(\ket{L'_{j_1,\dots,j_{N}}}\bra{L'_{j'_1,\dots,j'_{N}}}\ket{0}\bra{0})^L {\ket{t}}{\bra{t'}}\ket{0}\bra{0}) \nonumber \\
=&\frac{1}{2^{n+Nk}}\sum_{x=0}^{2^n-1}\sum_{y=0}^{2^{Nk}-1} \sum_{t=0}^{N}\sum_{\substack{j_1+\dots+j_{N}=t \\ j_i \in \{0,1\}}} \ket{\varphi_{j_1,\dots,j_{N}}}\bra{\varphi_{j_1,\dots,j_{N}}}^{W,R}  \label{eq:tensordist} \\
&\tr(\ket{L'_{j_1,\dots,j_{N}}}\bra{L'_{j_1,\dots,j_{N}}})\tr(\ket{0}\bra{0}) \tr({\ket{t}}{\bra{t}})\tr(\ket{0}\bra{0}) \nonumber \\
=&\frac{1}{2^{n+Nk}}\sum_{x=0}^{2^n-1}\sum_{y=0}^{2^{Nk}-1} \sum_{t=0}^{N}\sum_{\substack{j_1+\dots+j_{N}=t \\ j_i \in \{0,1\}}} \ket{\varphi_{j_1,\dots,j_{N}}}\bra{\varphi_{j_1,\dots,j_{N}}}^{W,R}  \label{eq:mixture}
\end{align}
\noindent where in \cref{eq:tensordist} we distribute the partial trace over the tensor factors. Since
orthogonal states evaluate to zero in each factor, only terms of factors with matching indices survive
in the sum, in which case these factors happen to be projectors of trace $1$. Thus \cref{eq:mixture} follows, which is clearly
a mixture of states $\ket{\varphi_{j_1,\dots,j_{N}}}$ as claimed. Since every $\ket{\varphi_{j_1,\dots,j_{N}}}$ is a state associated
to a terminated history,
we know the recursion of \cref{alg:quantummoser} has returned to the top level, in which all $m$ initial projectors
$\Pi_i$ have been measured. Thus by \cref{lem:recursion} we conclude that $\Pi_i^1\ket{\varphi_{j_1,\dots,j_{N}}}=0$
for all histories $j_1,\dots,j_{N}$.
\end{proof}
\noindent Note that the following lemma is the crucial (and only) place in the proof where commutativity of the projectors $\{\Pi_i\}$ is assumed.
\begin{lemma} \label{lem:recursion}
According to \cref{lem:superposition}, consider a history $\ket{j_1,\dots,j_l}$ in the superposition after $l$ coherent measurements
\be
\ket{\psi_l} = \ket{\varphi_{j_1,\dots,j_l}}^{W,R} \ket{j_1,\dots,j_l,0_{l+1},\dots,0_T}^L \ket{l}{\ket{t_{j_1,\dots,j_l}}}\ket{z_{j_1,\dots,j_l}}^{F,S,s,live,term}
\ee
where the last measurement has failed, i.e $j_l=1$. In this state \cref{alg:quantummoser} has started a new recursion level
and will coherently measure all projectors $\Pi_k \subseteq \Gamma^+(\Pi^1_{j_1,\dots,j_l})$ in subsequent iterations. For some iteration $m \geq l+k$, let
\be
\ket{\psi_m} = \ket{\varphi_{j_1,\dots,j_m}}^{W,R} \ket{j_1,\dots,j_m,0_{m+1},\dots,0_T}^L \ket{m}{\ket{t_{j_1,\dots,j_m}}}\ket{z_{j_1,\dots,j_m}}^{F,S,s,live,term}
\ee
be an extension of history $\ket{\psi_l}$ (i.e. with matching $j_1,\dots,j_l$) where \cref{alg:quantummoser} has just returned from that recursion. 
Then
\begin{enumerate}
\item all satisfied projectors $\Pi_i^1$ stay satisfied, i.e. if $\Pi_i^1 \ket{\varphi_{j_1,\dots,j_l}} = 0$,
      then also $\Pi_i^1 \ket{\varphi_{j_1,\dots,j_m}} = 0$. \label{it:other_projectors_satisfied2}
\item the originally unsatisfied projector is now satisfied, i.e. $\Pi_{j_1,\dots,j_l}^1\ket{\varphi_{j_1,\dots,j_m}}=0$.
      \label{it:current_projector_satisfied2}
\end{enumerate}
\end{lemma}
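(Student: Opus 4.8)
The plan is to replay the proof of \cref{lem:recursionsimple} verbatim, but carried out inside the manifestly unitary \cref{alg:quantummoser} and phrased in terms of the history decomposition of \cref{lem:superposition}. The argument is an induction on the recursion depth $s$ (the value of the stack pointer $s$) at the iteration $l$ where the failed measurement $\Pi^1_{j_1,\dots,j_l}$ occurs. The first thing I would establish is a bookkeeping dictionary: within a fixed history $\ket{j_1,\dots,j_m}$ extending $\ket{j_1,\dots,j_l}$, the iterations $l+1,\dots,m$ are exactly the iterations the algorithm spends inside the recursion that iteration $l$ pushed onto the stack. This is legitimate because, as observed in \cref{sec:quantum_algorithm}, every computational basis state of $L$ determines the control flow deterministically: reading $j_1,\dots,j_{l'-1}$ fixes which projector is coherently measured at step $l'$, whether a stack frame is pushed or popped, and whether that history's $term$ flag has been set. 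Hence ``\cref{alg:quantummoser} has just returned from that recursion'' is a well-defined property of the bit string, detected by $s$ (equivalently $F$) returning to its value just before iteration $l$; and within those iterations the only operations that touch $W$ and $R$ are one $R_i$ (swap-and-rotate) followed by a depth-first traversal of $\Gamma^+(\Pi_{j_1,\dots,j_l})$ with nested recursions, plus idle steps once a sub-history has terminated.

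\textbf{Base case.} The deepest recursion level. After $R_{j_1,\dots,j_l}$ is applied to $\Pi^1_{j_1,\dots,j_l}\ket{\varphi_{j_1,\dots,j_l}}$, the algorithm coherently measures every $\Pi_p\in\Gamma^+(\Pi_{j_1,\dots,j_l})$; being at the deepest level, these all succeed, i.e.\ in this history the relevant component is the one projected by $\Pi_p^0$, so $\Pi_p^1\ket{\varphi_{j_1,\dots,j_m}}=0$ for all $p\in\Gamma^+$. Since $\Pi_{j_1,\dots,j_l}\in\Gamma^+(\Pi_{j_1,\dots,j_l})$, this gives \labelcref{it:current_projector_satisfied2}. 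For \labelcref{it:other_projectors_satisfied2}: a projector $\Pi_i\in\Gamma^+$ that was already satisfied stays satisfied because all $\Pi_i$ pairwise commute, so applying $\Pi_p^0$ cannot move the state out of $\ker\Pi_i^1$; a projector $\Pi_i\notin\Gamma^+$ is untouched, since $R_{j_1,\dots,j_l}$ and all the neighbour measurements act only on the $W_p$ with $p\in\Gamma^+$ and on fresh blocks of $R$ and fresh qubits of $L$, all disjoint from $W_i$.

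\textbf{Inductive step.} Assume both claims hold for any recursion returning from level $s+1$. At level $s$, after the failed $\Pi^1_{j_1,\dots,j_l}$ and the swap-and-rotate, the algorithm iterates over $\Gamma^+(\Pi_{j_1,\dots,j_l})$. In this history each neighbour measurement either succeeds --- then that neighbour becomes satisfied and commutativity leaves the previously satisfied ones undisturbed --- or fails, whereupon the algorithm recurses to level $s+1$ and, by the induction hypothesis applied to the corresponding sub-history, returns with that neighbour satisfied and all previously satisfied projectors still satisfied. Either way the number of satisfied projectors in $\Gamma^+$ strictly increases at each step of the neighbourhood loop, so when the loop completes every $\Pi_p\in\Gamma^+(\Pi_{j_1,\dots,j_l})$ is satisfied; in particular \labelcref{it:current_projector_satisfied2} follows. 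Finally \labelcref{it:other_projectors_satisfied2} follows exactly as in the base case: the $\Gamma^+$-members that were already satisfied never become unsatisfied (commutativity), and projectors outside $\Gamma^+$ are never touched, since every operation performed anywhere inside this recursion --- swap-and-rotate, coherent measurements, and the $S,s,F,term$ updates --- acts on qubits of $\Gamma^+$, on fresh $R$ blocks, and on fresh $L$ qubits, all disjoint from $W_i$ for $i\notin\Gamma^+$. This closes the induction and the lemma follows.

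\textbf{Expected main obstacle.} The genuinely delicate part is not the algebra --- once the control flow is pinned down, the commutativity-plus-disjointness argument is the same one used in \cref{lem:recursionsimple} --- but verifying the bookkeeping dictionary itself: that the contiguous iteration block $l+1,\dots,m$ of a history really corresponds to one push/pop cycle of the stack, that idle iterations (after a history's $term$ flag is set, or the reverse iterations invoked with $live=0$) neither touch $W,R$ nor corrupt the frame, and that the coherent measurements inside the recursion act only on the $W_p$ with $p\in\Gamma^+$. I would spend most of the write-up making this correspondence between \cref{alg:quantummoser} and the recursive \cref{alg:moser} explicit, and keep the operator-level computation terse.
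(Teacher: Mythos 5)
Your proposal is correct and takes essentially the same route as the paper's own proof: an induction on the recursion (stack) depth within a fixed history, where the base case uses that at the deepest level all measurements of $\Gamma^+(\Pi_{j_1,\dots,j_l})$ succeed, previously satisfied projectors inside $\Gamma^+$ are re-established by the successful measurements, and those outside $\Gamma^+$ survive because they commute with the measured projectors and act on subsystems disjoint from the swap-and-rotate. The only cosmetic difference is that the paper proves part (1) first via the explicit operator identity $\Pi_q^1 \prod_{i\in\Gamma^+}\Pi_i^0\, R_{j_1,\dots,j_l}\,\Pi^1_{j_1,\dots,j_l}\ket{\varphi_{j_1,\dots,j_l}} = \prod_{i\in\Gamma^+}\Pi_i^0\, R_{j_1,\dots,j_l}\,\Pi^1_{j_1,\dots,j_l}\,\Pi_q^1\ket{\varphi_{j_1,\dots,j_l}} = 0$ and then deduces part (2) from the return condition plus commutativity, whereas you interleave the two parts as in \cref{lem:recursionsimple}; the substance is identical.
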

\begin{proof}
We first prove \cref{it:other_projectors_satisfied2} by induction on the
stack level $s$ of \cref{alg:quantummoser}, starting from the deepest level,
which must exist because the algorithm returns by assumption.\footnote{
In the main algorithm we apply this lemma only to histories in the subspace $(\id-P_{N})$,
where we have already shown that all histories terminate.} The recursive call can only return
if all $\Pi^1_i \in \Gamma^+(\Pi(s))$ are satisfied, i.e. $\Pi^1_i \ket{\varphi_{j_1,\dots,j_m}}=0$.
For all $\Pi^1_q \nsubseteq \Gamma^+(\Pi(s))$ with $\Pi^1_q \ket{\varphi_{j_1,\dots,j_l}}=0$,
we have
\begin{align}
\Pi_q^1 \ket{\varphi_{j_1,\dots,j_m}}\ket{\xi'} &= \Pi_q^1 \prod_{i \in \Gamma^+} \Pi_i^0 R_{j_1,\dots,j_l} \Pi_{j_1,\dots,j_l}^1 \ket{\varphi_{j_1,\dots,j_l}}\ket{\xi} \\
&= \prod_{i \in \Gamma^+} \Pi_i^0 R_{j_1,\dots,j_l} \Pi_{j_1,\dots,j_l}^1 \Pi_q^1 \ket{\varphi_{j_1,\dots,j_l}}\ket{\xi}  = 0
\end{align}
where we expand $\ket{\varphi_{j_1,\dots,j_m}}$ by the action of \cref{alg:quantummoser} in the first equality, where $\ket{\xi'}$ represents the state of subsystems other than $W,R$.
In the second equality we commute $\Pi^1_q$ through, which is possible, because $\Pi^1_q$ and all $\Pi_i$ commute
by assumption, and $\Pi^1_q$ and $R_{j_1,\dots,j_l} \Pi_{j_1,\dots,j_l}^1$ commute
because they act on different subsystems. Finally, the last equation follows since $\Pi_q^1 \ket{\varphi_{j_1,\dots,j_l}}=0$ is the precondition under which we need to prove \cref{it:other_projectors_satisfied2} of \cref{lem:recursion}. This proves the base case of the induction. The inductive step follows
from exactly the same arguments, thus \cref{it:other_projectors_satisfied2} follows.
To show \cref{it:current_projector_satisfied2} of \cref{lem:recursion}, it suffices to note that $\Pi_{j_1,\dots,j_l}^1 \in \Gamma^+(\Pi_{j_1,\dots,j_l}^1)$,
thus $\Pi_{j_1,\dots,j_l}^1\ket{\varphi_{j_1,\dots,j_m}}=0$ is true since the algorithm just returned from a recursive call on a failed measurement of $\Pi_{j_1,\dots,j_l}^1$ by assumption: i.e. in the iterations $<m$ just before the algorithm has returned, all
$\Pi_q^1 \in \Gamma^+(\Pi_{j_1,\dots,j_l}^1)$ had been measured to be satisfied (or fixed and then satisfed by \cref{it:other_projectors_satisfied2}).
Since all $\Pi_q^1$ commute, this implies $\Pi_{j_1,\dots,j_l}^1\ket{\varphi_{j_1,\dots,j_m}}=0$.
\end{proof}

\subsection{Upper bound on $N$}
In this section we compute an upper bound for $N$ defined implicitly by
\be \label{eq:ttilde}
N = \frac{\log(N)}{k-\log(der)} + \frac{m+\log(\frac{1}{\varepsilon})}{k-\log(der)}
\ee
\begin{lemma} \label{lem:explicitttilde}

Define $a=(k-\log(der))^{-1}$, $b=\frac{m+\log(\frac{1}{\varepsilon})}{k-\log(der)}$, and $N=t+a\log(t)$, then
\be
N
\leq
b + a (\log(a+1) + \log(b + a\log(a+1) ))
\leq
b + 3a(\log(b)+1)
\ee
\end{lemma}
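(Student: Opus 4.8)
The statement is a self-contained real-variable estimate; unwinding the main proof's notation ($a=c$, $b=d$), the only content is to bound the solution of the implicit relation $N=a\log N+b$ (\cref{eq:ttilde}) and then to observe that the slightly larger explicit quantity $b+3a(\log b+1)$ still satisfies the inequality \cref{eq:implicitttilde} that the algorithm actually needs. So the plan is to work directly with $N=a\log N+b$, using only two elementary facts: $\log x\le x$ for all $x>0$ (with a strictly positive gap, since $x-a\log x$ has positive minimum), and that $x\mapsto x-a\log x$ is strictly decreasing on $(0,a/\ln 2)$ and strictly increasing on $(a/\ln 2,\infty)$. In the regime of interest ($m$ large, so $b$ comfortably above $1$), the relation $N=a\log N+b$ has a largest solution $N^\star$ with $N^\star\ge b\ge 1$; this $N^\star$ is what I bound, and the boundary behaviour outside this regime is handled by the same estimates with extra slack.

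First, the main (left-hand) inequality. Put $B:=b+a\log(a+1)$ and $M:=(a+1)B$. A one-line computation gives
\begin{equation}
  M-a\log M=(a+1)B-a\log(a+1)-a\log B=aB+\bigl(B-a\log(a+1)\bigr)-a\log B=b+a\bigl(B-\log B\bigr)\ \ge\ b ,
\end{equation}
where the last step uses $B-\log B>0$. Since $M$ lies to the right of the minimiser $a/\ln 2$ (true whenever $b$ is at least a small constant), monotonicity of $x-a\log x$ on that half-line forces $N^\star\le M$. Feeding this back into the defining relation,
\begin{equation}
  N^\star=b+a\log N^\star\ \le\ b+a\log M\ =\ b+a\log(a+1)+a\log\bigl(b+a\log(a+1)\bigr),
\end{equation}
which is exactly the middle term of the displayed chain.

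For the second inequality, divide by $a>0$ and exponentiate (base $2$): it becomes $(a+1)\bigl(b+a\log(a+1)\bigr)\le 8b^3$. Here I would use $a\le b$, which holds because $b=a\bigl(m+\log(1/\varepsilon)\bigr)$ with $m+\log(1/\varepsilon)\ge 1$; then the left side is at most $(b+1)\bigl(b+b\log(b+1)\bigr)=b(b+1)\bigl(1+\log(b+1)\bigr)$, and the crude bounds $b+1\le 2b$ and $1+\log(b+1)\le 2+\log b\le 3b$ (valid for $b\ge 1$) give $b\cdot 2b\cdot 3b=6b^3\le 8b^3$. Chaining the two inequalities proves the lemma as stated; substituting the explicit value $N:=b+3a(\log b+1)\ge N^\star$ into \cref{eq:implicitttilde} and again invoking monotonicity of $x-a\log x$ past its minimum then shows this $N$ is admissible, which is the form the main proof consumes.

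I expect the only genuine obstacle to be the constant bookkeeping in the last step: it is conceptually routine, but one must pin down the admissible ranges of $a$ and $b$ — essentially $a\le b$ together with a constant lower bound on $b$ — for the numerals $3$ and $8$ to come out exactly. Outside that range either the inequality \cref{eq:implicitttilde} is trivially satisfied by every $N$ or the same crude estimates close the gap with room to spare, so nothing of substance is lost; the delicate part is purely the accounting of logarithms, not any new idea.
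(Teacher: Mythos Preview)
Your argument is correct and reaches exactly the same intermediate bound as the paper, but the mechanism is genuinely different. The paper's proof introduces the substitution $N=t+a\log t$ into the implicit relation $N=a\log N+b$; expanding and using $\log t\le t$ \emph{inside} the outer logarithm yields $\log(t+a\log t)\le\log((a+1)t)=\log(a+1)+\log t$, whereupon the $a\log t$ terms cancel from both sides and one reads off $t\le b+a\log(a+1)=:B$, hence $N\le B+a\log B$ directly. Your route instead posits the explicit candidate $M=(a+1)B$, verifies $M-a\log M\ge b$ by the one-line identity, and then invokes monotonicity of $x-a\log x$ past its minimiser to get $N^\star\le M$, feeding this back through $N^\star=b+a\log N^\star$ to land on the same expression. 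The paper's substitution is slicker (one cancellation, no monotonicity analysis needed), while your guess-and-verify is more mechanical but arguably more transparent about \emph{why} the bound has the shape it does; it also makes the dependence on the regime ($M$ lying right of $a/\ln 2$) explicit, which the paper's version sidesteps entirely. For the second inequality the paper simply asserts ``relaxing the bound further'' with no details; your reduction to $(a+1)(b+a\log(a+1))\le 8b^3$ via $a\le b$ and the crude chain $b(b+1)(1+\log(b+1))\le 6b^3$ is a legitimate way to fill that gap, and your caveat about needing $b\ge 1$ and $a\le b$ is appropriate and matches the intended parameter regime.
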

\begin{proof}
We start with \cref{eq:ttilde} as the implicit definition of $N$ to
derive the upper bound. Expanding the substitutions reduces \cref{eq:ttilde} to
\be
t+a\log(t) = a \log(t+a\log(t)) + b
\ee
Then we bound $\log(t)\leq t$ coarsely on the r.h.s., which yields
\be
t+a\log(t) \leq a \log(t(a+1)) + b
\ee
\be
t+a\log(t) \leq a \log(a+1)+a\log(t) + b
\ee
\be
t \leq a \log(a+1) + b
\ee
Thus
\be
N\leq b + a (\log(a+1) + \log(b + a\log(a+1) ))
\ee
which can be evaluated explicitly. Relaxing the bound further yields
\be
N \leq b + 3a(\log(b)+1)
\ee
\end{proof}
\noindent As asymptotic bounds we also have
$N \leq \frac{m+\log(\frac{1}{\varepsilon})}{k-\log(de)} + O(\log(m+\log(\frac{1}{\varepsilon})))$
or
$N \leq O(m+\log(\frac{1}{\varepsilon}))$.

\begin{algorithm}[!hbtp]
{\small
  \caption{Quantum information-theoretic QLLL solver} \label{alg:quantummoser}
	\begin{algorithmic}[1]
		\Procedure{\textnormal{QLLL\_solver}\label{proc:QLLLsolver}}{}
			\State $a:=1/\log(k-de)$
			\State $b:=(m+\log(1/\varepsilon))/\log(k-de)$
			\State $N :=  b + 3a(\log(b)+1)$ \label{line:ttilde}
			\State $T := m+Nd$ \label{line:T}
			\For{$l := 0$ to $T-1$} \label{line:block1:start}
				\State iteration($live=1$)
			\EndFor
			\For{$l := T-1$ to $0$}
				\State iteration$^\dag$($live=0$) \label{line:reverse}
			\EndFor \label{line:block1:end}
			\State compress($L,t$) \label{line:compress}
			\If{measure($\{P_{N},\id-P_{N}\}$)=$(\id-P_{N})$}
			  \State return SUCCESS, $W$ \label{line:success}
			\Else
			  \State return FAILURE
			\EndIf
		\EndProcedure

		\Procedure{\textnormal{iteration} \label{proc:iteration} }{live}
		  \If{not $term[l]$}
				\If{$live$}
					 \State $L[l] \gets \text{measure\_coherently}(\Gamma^+(S[s].proj,S[s].nbr))$ \label{line:cmeasurement}
				\EndIf
				\If{$L[l]$} \label{line:failed}
					\If{$live$}
						\State swap\_and\_rotate$(\Gamma^+(S[s].proj,S[s].nbr),R[tk])$ \label{line:swapandrotate}
					\EndIf
					\State $t \gets t+1$
					\If{$t=N$}
						\State $term[l+1] \gets term[l+1]+1$  \label{line:forcedtermination}
					\EndIf
					\State $S[s+1].proj \gets S[s+1].proj + \Gamma^+(S[s].proj,S[s].nbr)$
					\State $s \gets s+1$ \label{line:enterrecursion}
				\Else
					\If{$s=0$}
						\State $S[s].nbr \gets S[s].nbr+1 \mod m$
						\If{$[s].nbr=0$}
							\State $term[l+1] \gets term[l+1]+1$
						\EndIf
					\Else
						\State $S[s].nbr \gets S[s].nbr+1 \mod k$
						\If{$S[s].nbr=0$}
							\State $F[l] \gets F[l]+1$
						\EndIf
					\EndIf
					\If{$F[l]$}
						\State $s \gets s-1$ \label{line:returnrecursion}
						\State $S[s+1].proj \gets S[s+1].proj - \Gamma^+(S[s].proj,S[s].nbr)$
					\EndIf
				\EndIf
			\Else
				\State $term[l+1] \gets term[l+1]+1$
			\EndIf
		\EndProcedure
	\end{algorithmic}
}
\end{algorithm}

\begin{table}[!htbp] \label{tab:registers}
\begin{tabular}{c|l|c|l|p{4.5cm}}
subsystem & description & size (qubits) & initial value & comment \\
\hline
$W$    & work register       & $n$       & $\id/2^n$     & random initial assignment \\
$R$    & randomness register & $Nk$      & $\id/2^{Nk}$  & source of entropy \\
$L$    & recursion log register&$m+Nd$      & $\ket{0\dots,0}$ & indicates a failed measurements and thus the start of recursion\\
$F$    & return flag register& $m+Nd$       & $\ket{0\dots,0}$ & indicates return from recursion\\
$term$ & termination register& $m+Nd$       & $\ket{0\dots,0}$ & indicates \emph{no further operations} need to be performed\\
$S$    & stack register 		 & $2\log(N)\log(m)$ & $\ket{0,0}\dots\ket{0,0}$ & $\log(N)$ pairs of registers labeled $(S[i].proj,S[i].nbr)$ used to indicate the projector we're fixing and the current neighbor we're checking\\
$s$    & stack pointer       & $\log(N)$ & $\ket{0}$     & indicates the recursion level. \\
$l$    & log pointer         & $\log(N)$ & $\ket{0}$     & indicates the next empty record. \\
$t$    & randomness pointer  & $\log(N)$ & $\ket{0}$     & indicates the next available block of $k$ random bits, also the number of failed coherent measurements. \\
$live$ & modify $W,R$?  & $1$       & $\ket{0}$     & indicates if changes to $W,R$ are executed ($1$) or skipped ($0$).
\end{tabular}
\caption{The quantum registers and the initial state of \protect\cref{alg:quantummoser}} \label{fig:registers}
\end{table}

\end{document}